\documentclass[11pt]{article}
\usepackage[utf8]{inputenc}

\usepackage{authblk}
\usepackage{amsthm, amssymb, amsmath, mathtools, mathrsfs}
\usepackage{bm,xcolor}
\usepackage{physics}
\usepackage{geometry, graphicx, subcaption, epstopdf, float, fullpage}
\usepackage{qcircuit}
\usepackage{cite}
\usepackage{enumitem}
\usepackage{tikz-cd}

\usepackage[bookmarksnumbered,colorlinks,citecolor=red,linkcolor=blue,hyperindex,linktocpage=true]{hyperref}
\usepackage[nameinlink,capitalise]{cleveref}
\usepackage{appendix}

\newtheorem{theorem}{Theorem}[section]
\newtheorem{corollary}{Corollary}[theorem]
\newtheorem{lemma}{Lemma}
{\bf}{\it}
\newtheorem{remark}[theorem]{Remark}
\newtheorem{definition}[theorem]{Definition}
\newtheorem{Prop}[theorem]{Proposition}

\newcommand{\Anc}{\mathcal{H}_{\mathcal{A}}}
\newcommand{\IAd}{I_{\mathcal{A}} }

\newcommand{\Ran}{\operatorname{Range}}
\newcommand{\tH}{\widehat H}
\newcommand{\pH}{\widehat H_{\operatorname{proj}}}

\def\R{\mathbb{R}}
\def\C{\mathbb{C}}
\def\H{\mathbb{H}}
\def\V{\mathbb{V}}
\def\X{\mathbb{X}}
\def\M{\mathbb{M}}

\def\bu{\bm u}

\newcommand{\mat}[1]{\begin{pmatrix}#1 \\ \end{pmatrix}}

\title{Quantum Simulation of Differential-Algebraic Equations with Applications to Unsteady Stokes Flow}
\author[1]{Hsuan-Cheng Wu\thanks{wu.hsuancheng@psu.edu}}
\author[1]{Xiantao Li\thanks{Xiantao.Li@psu.edu}}

\affil[1]{Department of Mathematics, The Pennsylvania State University, PA 16802, USA}

\date{\today}
\begin{document}

\maketitle

\begin{abstract}
Differential-algebraic equations (DAEs) arise naturally in constrained dynamical systems,
but their algebraic constraints and hidden compatibility conditions make them more subtle than
standard ordinary differential equations. This paper initiates a quantum-algorithmic study of
constrained linear DAEs. We introduce a dilation framework that embeds the generally non-Hermitian
constrained evolution into a projected Schr\"odinger-type dynamics on an enlarged Hilbert space,
\[
i\frac{d}{dt}\Psi(t)=P\widehat H P\Psi(t),
\]
where $\widehat H$ is Hermitian and $P$ is the orthogonal projector onto the lifted constraint
subspace. This identifies the DAE evolution with a quantum Zeno-type dynamics and enables the use
of block encodings, QSVT-based projector construction, and Hamiltonian simulation.

We apply the framework to structure-preserving discretizations of the unsteady Stokes equations,
where the pressure enforces the discrete incompressibility constraint.  For Stokes, the Zeno-reduced generator has the projected square factorization
\[
    S_h=-\Pi_h\Delta_h\Pi_h=(G_h\Pi_h)^\dagger(G_h\Pi_h),
\]
which can be represented through a Gaussian moment dilation and implemented as a Gaussian
superposition of unitary Zeno evolutions generated by a first-order square-root Hamiltonian.  In the generic sparse-access
model, this gives a simulation-stage cost $\widetilde O(h^{-2}\sqrt t)$, up to the usual
postselection factor for preparing the normalized dissipative state.  The results provide a
first step toward understanding the intersection of quantum algorithms, DAEs, constrained PDE
dynamics, and square-root Gaussian dilations.
\end{abstract}

\section{Introduction}
Differential-algebraic equations (DAEs) arise naturally in the mathematical modeling of constrained dynamical systems and appear in many applications, including constrained mechanics, fluid dynamics, and multibody systems, \cite{10.1115/1.3167743, paraskevopoulos2017augmented, muzychka2010unsteady, huray2009maxwell}. A linear DAE is a system of equations of the form
\begin{equation} \label{eq: general DAE}
    E \bm x'(t) = A \bm x(t) + \bm f(t),
\end{equation}
where the matrix $E$ may be singular. In contrast to ordinary differential equations, not all components of $\bm x(t)$ are governed by differential laws; some are constrained by algebraic equations that must hold for all time. As a result, DAEs describe dynamics evolving on a constraint manifold, and their analysis is generally more subtle than that of standard ODEs. In particular, issues such as consistency of initial data, hidden constraints, and index play a central role in both theory and computation  \cite{wanner1996solving, ascher1991projected, petzold1982differential, RABIER2002183, gear1971simultaneous, gear1990differential}. A central example is the dynamics of incompressible flow, where the velocity is governed by an evolution equation while the pressure acts as a Lagrange multiplier enforcing the divergence-free constraint. A specific example is the unsteady Stokes system,
\begin{equation}
    \bu_t - \Delta \bu + \nabla p = \bm f,
    \qquad
    \nabla \cdot \bu = 0,
\end{equation}
where the incompressibility condition is algebraic in nature. After spatial discretization, one obtains a linear DAE in which the pressure variable does not evolve through its own differential equation, but instead enforces a constraint on the admissible velocity field. This structure makes DAEs fundamentally different from standard evolution equations and creates significant challenges for both analysis and numerical simulation.

Recent developments in quantum algorithms open up a new direction for scientific computing of large-scale problems. 
A central primitive in quantum computing is Hamiltonian simulation, namely the implementation of the unitary evolution
\(
e^{-itH}
\)
generated by a Hermitian matrix $H$ \cite{childs2012hamiltonian, berry2014exponential, gilyen2019quantum, childs2021theory, an2021time, ALL23, dizaji2024hamiltonian, eldar2002quantum}. Hamiltonian simulation lies at the core of many quantum algorithms, including those for linear systems \cite{harrow2009quantum, huang2019near, pan2014experimental, dervovic2018quantum, clader2013preconditioned}, differential equations \cite{childs2020quantum, liu2021efficient, joseph2020koopman, Ber14}, and quantum machine learning \cite{biamonte2017quantum, schuld2015introduction, cerezo2022challenges, huang2021power, ivashkov2026qkan}. 

Despite recent developments in quantum algorithms for differential equations and linear systems,
quantum algorithms for DAEs with explicit complexity analysis remain relatively unexplored. 
First,  a linear DAE is not, in general, presented in a form directly amenable to Hamiltonian simulation. Second, the presence of algebraic constraints means that the dynamics are confined to a proper subspace, and evolving the system without respecting that constraint would destroy the physical solution. Therefore, to use quantum hardware effectively, one needs a formulation that both preserves the constraint and converts the dynamics into a unitary form. 

The main idea of this work is to achieve such a reformulation through a dilation procedure \cite{li2025linear, jin2023schrodingerisation_pra, jin2022quantum}. Starting from a constrained linear DAE, we embed the dynamics into a larger Hilbert space by introducing an ancillary degree of freedom. This dilation transforms the original non-Hermitian constrained evolution into a Schr\"{o}dinger-type equation on the enlarged space. More importantly, the constraint can be lifted to a subspace condition in the dilated space, so that the physical evolution is recovered by restricting the dynamics to the corresponding invariant subspace. The resulting equation takes the form of a projected Hamiltonian evolution,
\begin{equation}
    \Psi'(t) = -i P \tH P \Psi(t),
\end{equation}
where \(P\) is the orthogonal projector onto the constrained subspace and \(\tH\) is a Hermitian operator on the dilated space. Remarkably, this is precisely the structure associated with quantum Zeno dynamics \cite{dizaji2024hamiltonian, facchi2008quantum, hahn2022unification, itano1990quantum, pascazio1994dynamical, koshino2005quantum}. In other words, the constrained DAE can be represented as the effective dynamics generated by repeated projection of an unconstrained unitary evolution. This observation provides a direct conceptual bridge between DAEs and Hamiltonian simulation.

The quantum Zeno form is particularly valuable because it leads to concrete quantum simulation strategies. One may either simulate the projected Hamiltonian \(P \tH P\) directly, provided a suitable block-encoding is available, or realize the same evolution through repeated short-time evolutions under \(\tH\) interlaced with projective restriction to the constraint subspace. In both viewpoints, the essential difficulty is shifted from the original DAE structure to the construction of the projection and the lifted Hamiltonian. Once this is achieved, the problem falls within the standard framework of unitary quantum dynamics. Hence the dilation does not merely provide an abstract embedding; it converts a constrained non-unitary evolution into an object compatible with the basic architecture of quantum algorithms.

A second theme of this paper is that standard spatial discretizations of the Stokes equations
naturally fit into this framework. In particular, mixed finite element discretizations lead to
semidiscrete saddle-point systems of the form
\begin{equation}
    \mat{M_h & 0 \\ 0 & 0}
    \mat{\bm u_h' \\ \bm p_h'}
    =
    \mat{-L_h & -D_h^\dag \\ D_h & 0}
    \mat{\bm u_h \\ \bm p_h}
    +
    \mat{\bm f_h \\ 0},
\end{equation}
where $M_h$ is the velocity mass matrix, $L_h$ is the velocity stiffness matrix, and $D_h$ is the
discrete divergence operator. This is a constrained linear DAE: the differential equation governs
the velocity unknowns, while the pressure appears as a Lagrange multiplier enforcing the discrete
incompressibility constraint. The saddle-point structure therefore provides a natural algebraic
setting for the dilation theory. The same philosophy also applies to staggered-grid finite
difference methods such as the Marker-and-Cell (MAC) scheme, where the discrete divergence and
gradient are arranged in a compatible way. Both MAC discretizations and mixed finite element
methods produce algebraic systems that separate the differential part from the constraint part,
making them well suited for the projected Hamiltonian formulation.

For the Stokes problem, the Zeno reduction reveals an additional structure beyond the generic DAE
formulation.  On the divergence-free subspace, the reduced generator is a projected positive square,
\[
    S_h=-\Pi_h\Delta_h\Pi_h=(G_h\Pi_h)^\dagger(G_h\Pi_h).
\]
This permits a more efficient square-root formulation.  Instead of applying the general
moment-matching dilation directly to the diffusive generator, one introduces the projected
first-order Dirac Hamiltonian
\[
    B_h=
    \begin{pmatrix}
        0&\Pi_hG_h^\dagger\\
        G_h\Pi_h&0
    \end{pmatrix},
    \qquad
    B_h^2=
    \begin{pmatrix}
        S_h&0\\
        0&*
    \end{pmatrix}.
\]
The heat semigroup $e^{-tS_h}$ is then obtained by a Gaussian moment dilation of
$e^{-x^2}$, equivalently by a Gaussian superposition of unitary evolutions generated by $B_h$.
This connects the Zeno formulation of the constraint with Gaussian-LCHS and transmutation
methods for factorizable dissipative dynamics~\cite{kharazi2026sublinear,jin2026transmutation}.

The purpose of this paper is therefore threefold. First, we develop a dilation framework for
constrained autonomous linear DAEs and show that it yields an exact projected Schr\"odinger
dynamics of quantum Zeno type. Second, we apply this framework to incompressible Stokes systems
arising from structure-preserving spatial discretizations, with particular attention to
staggered-grid formulations and mixed finite element methods. Third, for the Stokes system, we
show that the Zeno-reduced square structure can be combined with a Gaussian moment dilation to
improve the simulation-stage dependence on the diffusive scale. We also discuss the resulting
complexity estimates and give a heuristic comparison with standard classical projection methods,
emphasizing the dependence on the mesh size. This comparison is not intended as a definitive
end-to-end separation, but rather as a transparent starting point for identifying where quantum
simulation techniques may offer asymptotic advantages and where additional overheads, such as
state preparation and measurement, must be accounted for.

While this manuscript was being completed, Dutt et al. posted an independent quantum DAE solver for simulating RLC circuit dynamics based on modified nodal analysis~\cite{DuttChowdhuryTemmeKrovi2026}; our work differs in its quantum-Zeno dilation formulation and in its application to constrained PDE dynamics, particularly semidiscrete Stokes systems.


The remainder of the paper is organized as follows. In \cref{section: math_form}, we present the
abstract constrained linear DAE and derive its dilation into a projected Schr\"odinger form. We
then explain how this leads to a quantum Zeno interpretation and discuss the corresponding
simulation mechanism. In \cref{sec: Stokes}, we apply the framework to the unsteady Stokes
equations, first through Marker-and-Cell (MAC) staggered-grid finite differences and then through
mixed finite element discretizations. We then compare the direct projected-Hamiltonian simulation
baseline with a Gaussian-Zeno square-root construction that exploits the factorization of the
reduced Stokes operator.

\section{Preliminaries} \label{sec: prelim}
The implementation of quantum algorithms requires an appropriate input model. 
We briefly review the standard quantum algorithm primitives that will be used later to implement the projected Hamiltonian and the constraint projector. We use block-encoding techniques and quantum singular value transformation (QSVT) \cite{gilyen2019quantum} as the primary building blocks.

\begin{definition}[Block-encoding]
Let $A\in\mathbb{C}^{n\times n}$, let $\alpha\ge \|A\|$, and let $a\in\mathbb{N}$. A unitary
$U\in\mathbb{C}^{2^a n\times 2^a n}$ is called an $(\alpha,a,\varepsilon)$ block-encoding of $A$ if
\begin{equation}
\left\|
A-\alpha(\langle 0^a|\otimes I_n)\,U\,(|0^a\rangle\otimes I_n)
\right\|\le \varepsilon.
\end{equation}
\end{definition}

In other words, after preparing the ancilla register in the state $|0^a\rangle$, the upper-left block
of $U$ is $A/\alpha$ up to error $\varepsilon$. This representation allows one to encode a nonunitary matrix into a unitary.

\begin{lemma}[Some basic properties of block--encodings]
\leavevmode
\begin{enumerate}
    \item[(a)] Any unitary $U$ is a $(1,0,0)$ block-encoding of itself.
    \item[(b)] If $U$ is an $(\alpha,a,\varepsilon)$ block-encoding of $A$, then $U^\dagger$ is an
    $(\alpha,a,\varepsilon)$ block-encoding of $A^\dagger$.
\end{enumerate}
\end{lemma}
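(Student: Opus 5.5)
Both parts follow directly from the definition of an $(\alpha,a,\varepsilon)$ block-encoding, so I will simply check the defining inequality in each case.

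For (a), take $a=0$. The ancilla register is then trivial and $\langle 0^0|\otimes I_n$ is just $I_n$. I also need $\alpha=1\ge\|U\|$, which holds because a unitary has operator norm exactly $1$. Substituting into the definition gives
\begin{equation*}
\|U-1\cdot I_nUI_n\|=0\le 0,
\end{equation*}
so $U$ is a $(1,0,0)$ block-encoding of itself.

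For (b), the first step is the admissibility condition $\alpha\ge\|A^\dagger\|$. This follows from $\|A^\dagger\|=\|A\|\le\alpha$. Next I take the adjoint of the compressed operator, using $(XYZ)^\dagger=Z^\dagger Y^\dagger X^\dagger$ and $(|0^a\rangle\otimes I_n)^\dagger=\langle 0^a|\otimes I_n$:
\begin{equation*}
\Bigl(\alpha(\langle 0^a|\otimes I_n)U(|0^a\rangle\otimes I_n)\Bigr)^\dagger=\alpha(\langle 0^a|\otimes I_n)U^\dagger(|0^a\rangle\otimes I_n),
\end{equation*}
where $\alpha$ is real, so it is unchanged by conjugation. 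The final step uses the fact that the operator norm is invariant under taking adjoints, $\|X^\dagger\|=\|X\|$. Applying this to $X=A-\alpha(\langle 0^a|\otimes I_n)U(|0^a\rangle\otimes I_n)$ gives
\begin{equation*}
\|A^\dagger-\alpha(\langle 0^a|\otimes I_n)U^\dagger(|0^a\rangle\otimes I_n)\|=\|X\|\le\varepsilon.
\end{equation*}
Since $U^\dagger$ is itself unitary, it is an $(\alpha,a,\varepsilon)$ block-encoding of $A^\dagger$.

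There is no real obstacle here. The only point needing care is the convention in the definition that $\alpha\ge\|A\|$ must hold, so I check this condition explicitly in each part rather than only the error bound.
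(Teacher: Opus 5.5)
Your proof is correct: both parts follow immediately from the definition, and you verify both the normalization condition $\alpha\ge\|A^\dagger\|=\|A\|$ and the adjoint identity for the compressed block, which is all that is needed. The paper states this lemma without proof as a standard fact about block-encodings, so your direct verification is essentially the intended argument.
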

\begin{lemma}[Product and Linear Combinations of block--encodings \cite{childs2012hamiltonian}]
\label{lem:BE-lcu}
Let $A$ and $B$ have an $(\alpha,n,\varepsilon)$ block encoding with gate complexity $T_A$ and a $(\beta,m,\delta)$ block encoding with gate complexity $T_B$, correspondingly. Then 
\begin{enumerate}
    \item[(a)] There exists an $(\alpha+\beta,n+m,\alpha\delta+\beta\varepsilon)$ block encoding of $A+B$ with gate complexity $O(T_A+T_B)$.
    \item[(b)] There exists an $(\alpha\beta,n+m,\alpha\delta+\beta\varepsilon)$ block encoding of $AB$ with gate complexity $O(T_A+T_B)$.
\end{enumerate}
\end{lemma}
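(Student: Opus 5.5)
For part (a), the plan is the standard linear-combination-of-unitaries construction. Let $U_A$ act on $n$ ancillas and $U_B$ on $m$ ancillas, with $\|A-\alpha\tilde A\|\le\varepsilon$ and $\|B-\beta\tilde B\|\le\delta$, where $\tilde A=(\langle 0^n|\otimes I)U_A(|0^n\rangle\otimes I)$ and $\tilde B$ is defined likewise. I would first add one selector qubit. A state-preparation unitary $V$ maps $|0\rangle$ to $\sqrt{\alpha/(\alpha+\beta)}\,|0\rangle+\sqrt{\beta/(\alpha+\beta)}\,|1\rangle$. Then I apply the select unitary $|0\rangle\langle0|\otimes U_A\otimes I+|1\rangle\langle1|\otimes I\otimes U_B$, with each encoding acting on its own ancilla register, followed by $V^\dagger$. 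Projecting every ancilla onto $|0\rangle$ leaves the block $(\alpha\tilde A+\beta\tilde B)/(\alpha+\beta)$. The triangle inequality then bounds the error of $(\alpha+\beta)$ times this block by $\varepsilon+\delta$. Since $\alpha,\beta\ge\|\cdot\|$ are typically at least $1$, this is dominated by the stated bound $\alpha\delta+\beta\varepsilon$. The cost is one controlled call each to $U_A$ and $U_B$ plus $O(1)$ gates, which gives $O(T_A+T_B)$.

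One caveat concerns the ancilla count: the construction uses $n+m+1$ ancillas rather than $n+m$. To match the statement I would instead pad one register so that both encodings share $\max(n,m)+1\le n+m$ ancillas, assuming $n,m\ge1$. Alternatively, I would read the count up to this additive constant, as is customary.

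For part (b), I would apply $U_B$ and then $U_A$, each on its own ancilla register, giving $(I_{a}\otimes U_A)(I_{b}\otimes U_B)$ on $n+m$ ancillas. Because the registers are disjoint, the $|0^{n+m}\rangle$ block factorizes exactly as $\tilde A\tilde B$, so no cross terms arise. The error bound follows from the telescoping identity
\[
AB-\alpha\beta\tilde A\tilde B=(A-\alpha\tilde A)B+\alpha\tilde A(B-\beta\tilde B).
\]
Using $\|\tilde A\|\le1$, this gives at most $\varepsilon\|B\|+\alpha\delta\le\beta\varepsilon+\alpha\delta$, since $\|B\|\le\beta$ by the definition of block encoding. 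The gate complexity is $T_A+T_B$.

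The main obstacle is that the stated bounds in (a) do not literally follow without normalization assumptions; in particular, $\varepsilon+\delta\le\alpha\delta+\beta\varepsilon$ requires $\alpha,\beta\ge1$. I would state that assumption explicitly, or note that the sharper bound $\varepsilon+\delta$ holds in general.
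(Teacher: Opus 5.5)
Your proposal is correct and is the standard construction the paper relies on by citation (it gives no proof of its own): a one-qubit state-preparation/select LCU for $A+B$, and, for $AB$, $U_A$ and $U_B$ applied on disjoint ancilla registers together with the telescoping bound. Your caveats are also right. The LCU gives error $\varepsilon+\delta$, which implies the stated $\alpha\delta+\beta\varepsilon$ when $\alpha,\beta\ge1$ but not in general: with $B=0$ encoded exactly and $\beta<1$, the stated bound would sharpen the accuracy for $A$ from $\varepsilon$ to $\beta\varepsilon$ using only black-box access to $U_A$, so it appears to have been carried over from the product rule. Likewise, getting $n+m$ ancillas rather than $n+m+1$ requires your shared-register variant and $\min(n,m)\ge1$.
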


The algebraic constrains in a DAE corresponds to the null space of a matrix $A$. A particularly useful quantum algorithm for approximating the projection to the null spce is the Quantum singular value transformation (QSVT) 
 \cite{gilyen2019quantum}.
\begin{definition}[Quantum singular value transformation \cite{gilyen2019quantum}]
Let $A\in\mathbb{C}^{m\times n}$ and let $U$ be a block-encoding of $A/\alpha$.
Assume that
\begin{equation} A=\sum_j \sigma_j |u_j\rangle\langle v_j| \end{equation}
is a singular value decomposition, with singular values $\sigma_j\in[0,\alpha]$. QSVT is a procedure that, given a polynomial $p$ satisfying suitable parity and boundedness conditions, constructs a unitary whose action on the
singular subspaces of $A$ applies the map
\begin{equation}
\sigma_j/\alpha \;\longmapsto\; p(\sigma_j/\alpha).
\end{equation}
Equivalently, QSVT implements a block-encoding of a matrix function of $A$ whose singular values
are transformed by the polynomial $p$.
\end{definition}

For Hermitian matrices, QSVT reduces to polynomial transformation of singular values. This is the
mechanism behind Hamiltonian simulation, spectral filtering, and projector construction.

\begin{lemma}[Polynomial transformation by QSVT, {\protect\cite[Corollary~11]{gilyen2019quantum}}]
Let $U$ be an $(\alpha,a,\varepsilon)$ block-encoding of a matrix $A$, and let
$p:[-1,1]\to\mathbb{C}$ be a polynomial of degree $d$ satisfying the standard QSVT admissibility
conditions (in particular, appropriate parity and $|p(x)|\le 1$ on $[-1,1]$).
Then there exists a quantum circuit using $q$ queries to $U$ and $U^\dagger$ that implements a
block-encoding of the matrix obtained by replacing each singular value $\sigma_j/\alpha$ of
$A/\alpha$ by $p(\sigma_j/\alpha)$, up to error determined by the block-encoding error and the
polynomial approximation error.
\end{lemma}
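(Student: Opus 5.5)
The statement is the QSVT polynomial-transformation lemma, quoted from \cite[Corollary~11]{gilyen2019quantum}. I would prove it in three stages: build the circuit from quantum signal processing, reduce the general block-encoding to a direct sum of $2\times 2$ problems, and then account for the two error sources.

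\textbf{Circuit and reduction.} Take a singular value decomposition of $\tilde A := \alpha(\langle 0^a|\otimes I)U(|0^a\rangle\otimes I)$, which is a genuine sub-block of $U$ with $\|\tilde A/\alpha\|\le 1$. Define the projectors $\Pi=\widetilde\Pi=|0^a\rangle\langle 0^a|\otimes I$. The circuit alternates $U$ and $U^\dagger$ with projector-controlled phase rotations $e^{i\phi_k(2\Pi-I)}$ and $e^{i\phi_k(2\widetilde\Pi-I)}$, each implemented by one controlled reflection about $|0^a\rangle$ on a single extra qubit. By Jordan's lemma, the pairs $\{\Pi, U^\dagger\widetilde\Pi U\}$ decompose the space into invariant subspaces of dimension at most two. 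On each two-dimensional subspace, spanned by $|v_j\rangle$ and its companion vector, $U$ acts as the $2\times2$ rotation with cosine $\sigma_j/\alpha$. The one-dimensional subspaces correspond to $\sigma_j/\alpha\in\{0,1\}$.

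\textbf{Single-qubit signal processing.} The problem now reduces to the single-qubit statement. For any polynomial $p$ of degree $d$ with definite parity $d \bmod 2$ and $|p|\le1$ on $[-1,1]$, there exist phases $\phi_1,\dots,\phi_d$ such that the alternating product $W(x)$ has $\langle 0|W(x)|0\rangle=p(x)$. I would prove this by induction on degree. A polynomial $P$ of parity $d$ and a companion $Q$ of parity $d-1$ with $|P|^2+(1-x^2)|Q|^2=1$ uniquely determine a layer that can be peeled off, which lowers the degree. Existence of such a $Q$ is the main obstacle, because $|p|\le1$ only controls the real or imaginary part. I would first use the real-part decomposition: construct $Q$ via a Fejér–Riesz factorization of the nonnegative polynomial $1-|P|^2$, then apply a linear combination of unitaries to the two circuits for $p$ and $p^*$ to obtain general complex $p$ at the cost of one ancilla and a constant factor. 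Assembling over the Jordan blocks gives exactly $p(\tilde A/\alpha)$, with the singular-value convention set by the parity. This uses $d$ queries to $U$ and $U^\dagger$, which is the query count $q$ in the statement.

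\textbf{Error bound.} Since $p(\tilde A/\alpha)$ is realized exactly, it remains to compare it with $p(A/\alpha)$. Robustness of QSVT gives $\|p(A/\alpha)-p(\tilde A/\alpha)\|\le 4d\sqrt{\|A-\tilde A\|/\alpha}$, which is at most $4d\sqrt{\varepsilon/\alpha}$. I would show this by telescoping the alternating product and bounding the perturbation of each factor. Any polynomial approximation error $\|f-p\|_\infty$ then adds directly by the triangle inequality, and together these give the stated accuracy.
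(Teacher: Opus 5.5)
Your outline is the standard Gily\'en--Su--Low--Wiebe argument: reduction to $2\times 2$ invariant blocks, quantum signal processing with the real-part completion and an LCU over the phase sequences $\Phi$ and $-\Phi$, and the $4d\sqrt{\|A-\tilde A\|/\alpha}$ robustness lemma. That is exactly what the paper relies on, since it gives no proof of its own and simply cites \cite[Corollary~11]{gilyen2019quantum}. One point to tidy is that your initial signal-processing claim is false as stated: for example, $p(x)=x/2$ cannot be the $\langle 0|W(x)|0\rangle$ entry, because that entry has modulus $1$ at $x=\pm1$. So state the real-part version from the outset: the LCU over $\Phi,-\Phi$ produces $\operatorname{Re}P$, and a second LCU of the real and imaginary parts gives a complex $p$ at normalization $2$.
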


A particularly important application for us is the construction of approximate spectral projectors.

\begin{lemma}[Approximate projector from a spectral gap, {\protect\cite[Theorem~19]{gilyen2019quantum}}]
Let $A$ be Hermitian with spectrum contained in $\{0\}\cup [\gamma,\alpha]$ for some
$0<\gamma\le \alpha$, and suppose that $U$ is a block-encoding of $A$.
Then for every $\varepsilon\in(0,1)$ there exists a polynomial $p$ of degree
\begin{equation}
q=O\!\left(\frac{\alpha}{\gamma}\log\frac{1}{\varepsilon}\right)
\end{equation}
such that
\begin{equation}
p(0)\approx 1, \qquad |p(x)|\le \varepsilon \quad \text{for } x\in[\gamma/\alpha,1],
\end{equation}
and hence QSVT yields an $\varepsilon$-approximate block-encoding of the orthogonal projector
onto $\ker(A)$.
\end{lemma}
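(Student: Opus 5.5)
The plan is to reduce to a scalar polynomial approximation problem and then invoke the QSVT polynomial-transformation lemma stated just before. Since $U$ block-encodes $A$ with normalization $\alpha$, the spectrum of $A/\alpha$ lies in $\{0\}\cup[\gamma/\alpha,1]$. Write $\delta=\gamma/\alpha$. It suffices to construct an even real polynomial $p$ with $|p(x)|\le 1$ on $[-1,1]$, $|p(0)-1|\le\varepsilon$, and $|p(x)|\le\varepsilon$ for $\delta\le|x|\le1$. For Hermitian $A$ with eigendecomposition $A=\sum_j\lambda_j|v_j\rangle\langle v_j|$, an even $p$ acts as $p(A/\alpha)=\sum_j p(\lambda_j/\alpha)|v_j\rangle\langle v_j|$, which coincides with the singular-value transformation because $p(\sigma)=p(|\lambda|)$.

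For the polynomial, I would first approximate $e^{-x^2/\delta^2}$ (or $e^{-k x^2}$ with $k\sim\delta^{-2}\log(1/\varepsilon)$). This function equals $1$ at $0$ and is at most $\varepsilon$ for $|x|\ge\delta$. It is bounded by $1$ and even on $[-1,1]$. The standard Chebyshev or Jacobi–Anger-type truncation of a Gaussian $e^{-kx^2}$ on $[-1,1]$ achieves uniform error $\varepsilon$ with degree $O(\sqrt{k\log(1/\varepsilon)})=O(\delta^{-1}\log(1/\varepsilon))$, which gives the claimed $q$. Alternatively one can use a rescaled Chebyshev polynomial, $p(x)=T_m\!\big(\tfrac{1+\delta^2-2x^2}{1-\delta^2}\big)/T_m\!\big(\tfrac{1+\delta^2}{1-\delta^2}\big)$. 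This is even, equals $1$ at $0$, and is bounded on $\delta\le|x|\le1$ by $1/T_m(1+\Theta(\delta^2))\le 2e^{-\Theta(m\delta)}$, so degree $2m=O(\delta^{-1}\log(1/\varepsilon))$ suffices. Its sup on $[-\delta,\delta]$ is $1$, so boundedness holds exactly. I would use this version because it avoids a separate rescaling to enforce $|p|\le1$.

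Finally I would apply QSVT with this $p$ to get a block-encoding of $p(A/\alpha)$ using $q$ queries. I would then bound $\|p(A/\alpha)-\Pi_{\ker A}\|\le\max\big(|p(0)-1|,\sup_{\delta\le|x|\le1}|p(x)|\big)\le\varepsilon$, where $p(0)=1$ exactly in the Chebyshev construction. Any block-encoding error $\varepsilon_U$ in $U$ propagates as $O(q\sqrt{\varepsilon_U})$ or $O(q\varepsilon_U)$ by the robustness of QSVT. I expect the main obstacle to be getting the optimal $1/\delta$ scaling rather than $1/\delta^2$. A naive choice such as $(1-x^2)^m$ needs $m\sim\delta^{-2}\log(1/\varepsilon)$, so the Chebyshev acceleration, or equivalently the square-root degree for the Gaussian, is essential. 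Verifying the estimate $T_m(1+\eta)\ge\tfrac12 e^{m\sqrt{2\eta}}$ for small $\eta$ is the key computation.
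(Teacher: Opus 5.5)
The paper gives no proof of this lemma; it imports it from \cite{gilyen2019quantum}. There, gapped (threshold) projectors come from polynomial approximations of a sign or rectangle function, built from an $\operatorname{erf}$ approximation obtained by truncating a Gaussian expansion, and QSVT is then applied.

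Your proposal is correct and takes a more elementary, self-contained route. You replace that machinery with the explicit shifted Chebyshev filter $p(x)=T_m\bigl(\tfrac{1+\delta^2-2x^2}{1-\delta^2}\bigr)/T_m\bigl(\tfrac{1+\delta^2}{1-\delta^2}\bigr)$. This is in fact the minimax-optimal even polynomial with $p(0)=1$ on $[\delta,1]$. After substituting $s=x^2$ it becomes the classical Chebyshev extremal problem behind Chebyshev and conjugate-gradient acceleration, with effective condition number $\delta^{-2}$, which is why the degree is $\delta^{-1}\log(1/\varepsilon)$.

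What your route buys is exactness and transparency:
\begin{itemize}
\item $p(0)=1$ and $|p|\le 1$ on $[-1,1]$ hold exactly, with no rescaling.
\item The constants are explicit.
\item Because $p$ is even, it applies verbatim to the singular values of a rectangular operator. That is how the lemma is actually used for $D$ in \cref{thm:end_to_end_complexity}.
\end{itemize}
What the sign-function route buys is generality: it produces threshold projectors at an arbitrary cut-off with a gap on both sides. Your filter only isolates the single point $0$, but that is all a kernel projector needs.

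Two small repairs are needed.
\begin{itemize}
\item Your final estimate $T_m(1+\eta)\ge\tfrac12 e^{m\sqrt{2\eta}}$ is not valid uniformly. Since $\operatorname{arccosh}(1+\eta)<\sqrt{2\eta}$ for $\eta>0$, it fails for large $m$ at fixed $\eta$, for example for fixed $\delta$ and $\varepsilon\to 0$. Use instead $T_m(1+\eta)\ge\tfrac12\bigl(1+\eta+\sqrt{2\eta+\eta^2}\bigr)^m\ge\tfrac12(1+\sqrt{2\eta})^m$. With $\eta=2\delta^2/(1-\delta^2)\ge 2\delta^2$ this gives $T_m\ge\tfrac12(1+2\delta)^m\ge\tfrac12\,2^{m\delta}$. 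So $\sup_{\delta\le|x|\le1}|p(x)|\le 2^{1-m\delta}$, and $m\ge\delta^{-1}\log_2(2/\varepsilon)$ suffices. This is your $\Theta$-level claim with honest constants.
\item Your affine map is undefined when $\gamma=\alpha$, i.e.\ $\delta=1$. In that case run the construction with $\delta/2$, which is harmless since the spectrum still lies in $\{0\}\cup[\delta/2,1]$.
\end{itemize}
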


We then introduce the complexity of optimal Hamiltonian simulation, which is proved in \cite{low2019hamiltonian, gilyen2019quantum}.
\begin{lemma}[Hamiltonian simulation from block-encoding, {\protect\cite[Corollary~32]{gilyen2019quantum}}] \label{lem: optimal sim}
Let $H$ be Hermitian and suppose $U_H$ is an $(\alpha,a,\varepsilon')$ block-encoding of $H$.
Then for any $t\ge 0$ and target precision $\varepsilon\in(0,1)$ ($\varepsilon'=\varepsilon/t$), one can implement a unitary
approximating $e^{-itH}$ with query complexity
\begin{equation}
O\!\left(
\alpha t + \frac{\log(t/\varepsilon)}
{\log\!\bigl(e+\frac{\log(t/\varepsilon)}{\alpha t}\bigr)}
\right),
\end{equation}
up to standard overheads in ancilla qubits and elementary gates.
\end{lemma}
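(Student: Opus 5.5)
The plan is to reduce the statement to the known optimal Hamiltonian simulation algorithm via quantum signal processing, following \cite{low2019hamiltonian, gilyen2019quantum}. We will not simulate $e^{-itH}$ directly. Instead, we will approximate $\cos(tx)$ and $\sin(tx)$ on $x\in[-1,1]$ by polynomials in the rescaled variable $x=\lambda/\alpha$, where $\lambda$ ranges over the spectrum of $H$, and then combine the two parts. The natural tool is the Jacobi--Anger expansion,
\[
e^{-i\alpha t x}=J_0(\alpha t)+2\sum_{k\ge1}(-i)^k J_k(\alpha t)\,T_k(x).
\]
The even-order terms give $\cos(\alpha t x)$ and the odd-order terms give $\sin(\alpha t x)$. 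This split is convenient because each part then has definite parity and is real-valued, which are exactly the QSVT admissibility conditions in the polynomial-transformation lemma.

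The first step is truncation. We keep the expansion up to degree $R$ and bound the tail using $|J_k(\tau)|\le (\tau/2)^k/k!$ with $\tau=\alpha t$. We then choose $R$ so that the tail is at most $\varepsilon$. Solving $(e\tau/2R)^R\lesssim\varepsilon$ gives
\[
R=O\!\left(\tau+\frac{\log(1/\varepsilon)}{\log(e+\log(1/\varepsilon)/\tau)}\right).
\]
This is exactly the claimed bound with $\tau=\alpha t$. The $\log(t/\varepsilon)$ in place of $\log(1/\varepsilon)$ comes from absorbing the block-encoding error, as described in the third step.

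The second step is to implement the two polynomials. After rescaling by a factor such as $1-\varepsilon$, each truncated polynomial is bounded by $1$ on $[-1,1]$ and has definite parity. We therefore apply the polynomial-transformation lemma (Corollary~11 of \cite{gilyen2019quantum}) to $U_H$ twice: once for the even part and once for the odd part. This gives block-encodings of $\cos(tH)$ and $-i\sin(tH)$ (up to normalization), each using $O(R)$ queries. We add them with the linear-combination lemma, \cref{lem:BE-lcu}(a), which produces a $(2,\cdot,\cdot)$ block-encoding of $e^{-itH}$. Because the target operator is unitary, a single step of oblivious amplitude amplification turns this into an approximate unitary. This costs a constant factor in queries and increases the error by at most a constant multiple.

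The third step is error propagation, which I expect to be the main obstacle. The block-encoding carries error $\varepsilon'$, and a degree-$R$ polynomial transformation can amplify it. The robustness bound for QSVT (Lemma~22 of \cite{gilyen2019quantum}) bounds the resulting error by $O(R\sqrt{\varepsilon'/\alpha})$ in general. For our bounded-derivative polynomials, however, a Lipschitz argument suffices: the derivative of $\cos(\alpha t x)$ and $\sin(\alpha t x)$ is of order $\alpha t$, so an operator perturbation of size $\varepsilon'/\alpha$ changes the output by $O(t\varepsilon')$. Choosing $\varepsilon'=\varepsilon/t$ therefore keeps this contribution at $O(\varepsilon)$. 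The remaining work is to check that this Lipschitz bound holds for the truncated polynomials themselves, not only for the exact trigonometric functions. That check follows from Markov-type bounds combined with the Chebyshev coefficient decay. It is also why the precision parameter enters the final estimate as $\log(t/\varepsilon)$.
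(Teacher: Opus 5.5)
The paper gives no proof of this lemma; it only quotes \cite[Corollary~32]{gilyen2019quantum}. Your architecture is the standard proof of that corollary: Jacobi--Anger truncation, parity-split real QSVT, LCU to a $(2,\cdot,\cdot)$ block-encoding, then one round of oblivious amplitude amplification. Steps one and two are fine. The gap is in step three, which you flag yourself.

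First, you pass from the scalar bound $|\tfrac{d}{dx}\cos(\alpha t x)|\le\alpha t$ to an operator-norm perturbation bound. Your proposed remaining check (Markov-type bounds plus Chebyshev decay for the truncated polynomial $p_R$) would at best control the scalar derivative of $p_R$. That is not the quantity needed: a scalar Lipschitz bound does not control $\|f(X)-f(Y)\|$ for matrices with a dimension-independent constant ($f(x)=|x|$ is the classical counterexample). Second, the block $\tilde A:=(\langle 0^a|\otimes I)U_H(|0^a\rangle\otimes I)$ is only $\varepsilon'/\alpha$-close to $H/\alpha$ and need not be Hermitian. QSVT applied to $U_H$ therefore outputs exactly the singular-value transform $p_R^{(\mathrm{SV})}(\tilde A)$, e.g.\ $q(\tilde A^\dagger\tilde A)$ for even $p_R(x)=q(x^2)$, and not a polynomial of a Hermitian matrix. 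Your argument never addresses this.

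The fix needs no derivative bounds on $p_R$ and no appeal to the QSVT robustness lemma. For Hermitian $X,Y$, Duhamel's formula
\[
e^{-i\tau X}-e^{-i\tau Y}=-i\tau\int_0^1 e^{-is\tau X}(X-Y)\,e^{-i(1-s)\tau Y}\,ds
\]
gives $\|f(X)-f(Y)\|\le\tau\|X-Y\|$ for $f=\cos(\tau\,\cdot)$ and $f=\sin(\tau\,\cdot)$. Since $p_R$ approximates $f$ uniformly on $[-1,1]$ within $\varepsilon_R$, the triangle inequality gives $\|p_R(X)-p_R(Y)\|\le\tau\|X-Y\|+2\varepsilon_R$ whenever $\|X\|,\|Y\|\le 1$.

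For the non-Hermitian block, apply this to the Hermitian dilations $X=\begin{pmatrix}0&\tilde A\\ \tilde A^\dagger&0\end{pmatrix}$ and $Y=\begin{pmatrix}0&H/\alpha\\ H/\alpha&0\end{pmatrix}$. Here $\|X-Y\|=\|\tilde A-H/\alpha\|$. Moreover, $p_R^{(\mathrm{SV})}(\tilde A)$ and $p_R(H/\alpha)$ occupy corresponding blocks of $p_R(X)$ and $p_R(Y)$: a diagonal block for even $p_R$, an off-diagonal block for odd $p_R$. With $\tau=\alpha t$ the error is at most $t\varepsilon'+2\varepsilon_R=O(\varepsilon)$ when $\varepsilon'=\varepsilon/t$.

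Note also that the block-encoding error enters only additively and does not change the degree $R$. Your argument therefore gives the bound with $\log(1/\varepsilon)$. The stated $\log(t/\varepsilon)$ form follows from that by monotonicity when $t\ge1$; contrary to your last sentence, step three does not produce it.
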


\paragraph{Notation}
In this work, we denote the vectors by bold lower case, e.g., $\bm x$ and $\bm u$. We denote quantum states in ket-bra notations. $\dag$ represents the conjugate transpose, and $'$ is the derivative with respect to time.

\section{Mathematical Formulation} \label{section: math_form}

To facilitate the quantum simulation of DAE systems, we develop the formal framework for mapping a constrained autonomous linear differential-algebraic equation to a dilated Schr\"{o}dinger-like system. We then demonstrate how this dilation naturally leads to a quantum Zeno-like projected evolution. 

\subsection{From DAE to Dilated DAE}

Consider a constrained autonomous linear DAE for the state vector $\bm x(t) \in \mathbb{C}^n$:
\begin{subequations} \label{eq:original_dae}
\begin{align}
    & \frac{d}{dt} \bm x(t) = L \bm x(t) + C^{\dagger} \bm \lambda(t), \label{eq:dae_ode} \\
    & C \bm x(t) = 0, \label{eq:dae_constraint} \\
    & \bm x(0) = \bm x_0 \in \ker(C),
\end{align}
\end{subequations}
where $L \in \mathbb{C}^{n \times n}$ is the generator of the unconstrained dynamics, $C \in \mathbb{C}^{m \times n}$ is the constraint operator, and $\lambda(t) \in \mathbb{C}^m$ is the vector of Lagrange multipliers. We assume that $C$ has full row rank, such that $CC^{\dagger}$ is invertible.

While not every linear DAE is initially presented in the form of \cref{eq:original_dae}, various techniques exist to transform a general system into this structure. For high-index DAEs, one may perform index reduction to express it as an index~1 DAE. We refer readers to \cite{wanner1996solving} for more details.

\begin{lemma}
Suppose a linear DAE can be written in the semi-explicit form
\begin{equation}
\bm x' = L\bm x + G\bm \mu,\qquad C\bm x=0,
\end{equation}
where \(C\) has full row rank and $CG$ is invertible. If
\begin{equation}
\Ran(G)=\Ran(C^\dagger),
\end{equation}
then, after a nonsingular change of multiplier variable, the DAE can be written as \cref{eq:original_dae}.
\end{lemma}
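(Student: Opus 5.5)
The plan is to use the range condition to write $G$ as $C^\dagger$ times a square matrix, and then check that this matrix is invertible. Let $G\in\mathbb{C}^{n\times k}$ and $\bm\mu\in\mathbb{C}^k$. Since $\Ran(G)=\Ran(C^\dagger)$, each column of $G$ lies in $\Ran(C^\dagger)$. Hence there is a matrix $K\in\mathbb{C}^{m\times k}$ with $G=C^\dagger K$. Because $C$ has full row rank, $C^\dagger$ is injective, so $K$ is unique. Explicitly, $K=(CC^\dagger)^{-1}CG$: multiplying $G=C^\dagger K$ on the left by $C$ gives $CG=CC^\dagger K$, and $CC^\dagger$ is invertible by the full row rank assumption.

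Next we show $K$ is square and invertible. The hypothesis that $CG$ is invertible forces $CG$ to be square, and $CG\in\mathbb{C}^{m\times k}$, so $k=m$. Then $K=(CC^\dagger)^{-1}(CG)$ is a product of two invertible $m\times m$ matrices, hence invertible. We define the new multiplier $\bm\lambda(t):=K\bm\mu(t)$, which is a nonsingular change of variable with inverse $\bm\mu=K^{-1}\bm\lambda$.

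Substituting into the differential equation gives $G\bm\mu=C^\dagger K\bm\mu=C^\dagger\bm\lambda$. The system therefore becomes $\bm x'=L\bm x+C^\dagger\bm\lambda$ with $C\bm x=0$ unchanged, which is exactly \cref{eq:original_dae}. Consistent initial data $\bm x_0\in\ker(C)$ is unaffected, since only the multiplier is transformed. Solutions $(\bm x,\bm\mu)$ of the original system and solutions $(\bm x,\bm\lambda)$ of \cref{eq:original_dae} correspond one-to-one through $K$.

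The only step needing care is the dimension bookkeeping: showing that $K$ is square and invertible, rather than merely that some factorization $G=C^\dagger K$ exists. This comes from the explicit formula for $K$ together with the invertibility of $CG$. The range condition $\Ran(G)=\Ran(C^\dagger)$ is used only to obtain the factorization; the inclusion $\Ran(G)\subseteq\Ran(C^\dagger)$ alone already suffices for it.
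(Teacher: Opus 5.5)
The paper states this lemma without proof, so there is no argument of its own to compare against. Your proposal supplies one, and it is correct and complete.

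The steps are all sound:
\begin{itemize}
\item the factorization $G=C^\dagger K$ follows from $\Ran(G)\subseteq\Ran(C^\dagger)$;
\item $K$ is unique and equals $(CC^\dagger)^{-1}CG$, because $C^\dagger$ is injective;
\item $k=m$ is forced, since the $m\times k$ matrix $CG$ is invertible;
\item the substitution $\bm\lambda=K\bm\mu$ then gives exactly \cref{eq:original_dae}, with the constraint and the consistent initial data unchanged.
\end{itemize}

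Your closing remark is also right, and it can be sharpened in both directions. Once $K$ is invertible, $\Ran(G)=\Ran(C^\dagger K)=\Ran(C^\dagger)$ holds automatically, so the hypothesis could be weakened to the inclusion together with invertibility of $CG$. Conversely, if you assume the range equality, then invertibility of $CG$ is equivalent to $k=m$. Indeed, $\operatorname{rank}K=\operatorname{rank}G=m$ then makes $K$ invertible, and hence $CG=CC^\dagger K$ is invertible.
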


\begin{lemma} \label{lem: index 1 to target}
    For a  general linear index-one DAE, after row reduction one may obtain
    \begin{equation}
       \mat{I & 0 \\ 0 & 0}\mat{\bm x_1 \\ \bm x_2}' = \mat{A_{11} & A_{12} \\ A_{21} & A_{22}} \mat{\bm x_1 \\ \bm x_2}.
    \end{equation}
    If $A_{22}$ is nonsingular, by defining 
    \begin{equation} L = \mat{ A_{11} & A_{12} \\ -A_{22}^{-1}A_{21}A_{11} & -A_{22}^{-1}A_{21}A_{12}}, \qquad C = \mat{A_{21} & A_{22}},\qquad \bm x = \mat{\bm x_1 \\ \bm x_2},\end{equation}
    and the Lagrange multiplier $\bm\lambda$ is chosen by,
    \begin{equation} \bm \lambda = -(CC^\dag)^{-1}CL\bm x,\end{equation}
    we obtain the form in \cref{eq:dae_ode,eq:dae_constraint}.
\end{lemma}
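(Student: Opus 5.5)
The plan is to verify directly that, with $L$, $C$ and $\bm\lambda$ defined as in \cref{lem: index 1 to target}, every solution of the row-reduced index-one system satisfies \cref{eq:dae_ode,eq:dae_constraint}, and conversely. The argument rests on one algebraic identity, $CL=0$, which I compute first.

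\textbf{Preliminaries.} Write the row-reduced system as
\begin{equation}
\bm x_1' = A_{11}\bm x_1 + A_{12}\bm x_2, \qquad 0 = A_{21}\bm x_1 + A_{22}\bm x_2 .
\end{equation}
The second equation is exactly $C\bm x = 0$ with $C = \mat{A_{21} & A_{22}}$. Since $A_{22}$ is nonsingular, $C$ has full row rank, so $CC^\dagger$ is invertible. This is the standing assumption of \cref{eq:original_dae}, and it also makes the formula for $\bm\lambda$ well defined.

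\textbf{Step 1 (hidden constraint).} Because $A_{22}$ is invertible, the constraint gives $\bm x_2 = -A_{22}^{-1}A_{21}\bm x_1$. Hence $\bm x_2$ is as smooth as $\bm x_1$, and we may differentiate the constraint. Substituting the first equation yields
\begin{equation}
\bm x_2' = -A_{22}^{-1}A_{21}\bm x_1' = -A_{22}^{-1}A_{21}\bigl(A_{11}\bm x_1 + A_{12}\bm x_2\bigr).
\end{equation}
Stacking this with the first equation gives precisely $\bm x' = L\bm x$. This is where the second block row of $L$ comes from.

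\textbf{Step 2 (the identity $CL=0$).} Write the first block row of $L$ as $R = \mat{A_{11} & A_{12}}$, so that $L = \mat{R \\ -A_{22}^{-1}A_{21}R}$. Then
\begin{equation}
CL = A_{21}R + A_{22}\bigl(-A_{22}^{-1}A_{21}R\bigr) = 0 .
\end{equation}
Consequently $\bm\lambda = -(CC^\dagger)^{-1}CL\bm x = 0$ for every $\bm x$. Combined with Step 1, the original solution satisfies $\bm x' = L\bm x + C^\dagger\bm\lambda$ and $C\bm x = 0$, which is the form \cref{eq:dae_ode,eq:dae_constraint}.

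\textbf{Step 3 (converse).} Take a solution of \cref{eq:dae_ode} with this $\bm\lambda$ and with $\bm x_0\in\ker(C)$. Then $\bm\lambda = 0$, so $\bm x' = L\bm x$, and therefore
\begin{equation}
\tfrac{d}{dt}(C\bm x) = CL\bm x = 0 .
\end{equation}
Thus $C\bm x(t) = C\bm x_0 = 0$ for all $t$, which is the algebraic equation. The first block row of $\bm x' = L\bm x$ is the differential equation for $\bm x_1$. So the two formulations have the same solutions for consistent initial data.

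\textbf{Main subtlety.} The only real issue is this equivalence, namely that the differentiated (index-reduced) system does not introduce spurious solutions. It is settled by the invariance argument of Step 3 together with the requirement $\bm x_0\in\ker(C)$. The remaining steps are routine block-matrix algebra.
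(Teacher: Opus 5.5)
Your proof is correct and matches the paper's reasoning, which appears only in the remark after the lemma: $CL=0$ forces $\bm\lambda=0$, so the constrained form is just the differentiated (Schur-complement) dynamics $\bm x'=L\bm x$ on $\ker(C)$. Your Step 1 derivation of the second block row of $L$ and your Step 3 invariance argument supply details the paper leaves implicit.
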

\begin{remark}
With the choice of $C$ and $L$ in \cref{lem: index 1 to target}, one has $CL=0$. Hence the multiplier in the constrained form satisfies $\bm\lambda=0$. Thus, \cref{lem: index 1 to target} should be interpreted as rewriting the index-one DAE through its Schur-complement reduced dynamics on the constraint manifold, rather than producing a nontrivial Lagrange-multiplier representation.
\end{remark}

To construct a unitary dilation, we decompose the operator $L$ into its Hermitian and anti-Hermitian parts:
\begin{equation}
    L = -iH + K, \quad \text{where} \quad H = \frac{i(L - L^{\dagger})}{2}, \quad K = \frac{L + L^{\dagger}}{2}.
\end{equation}
Note that both $H$ and $K$ are Hermitian. We define an ancillary Hilbert space $\mathcal{H}_{\mathcal{A}}$ (the ancilla space) equipped with an operator $F$ and two states $\ket{r}, \ket{l} \in \mathcal{H}_{\mathcal{A}}$ that satisfy the moment-matching dilation property: $\bra{l} F^k \ket{r} = 1$ for all $k \ge 0$. The moment-matching dilation has been proposed in \cite{li2025linear}.

We define the dilated Hamiltonian $\tH$ acting on $\mathcal{H}_{\mathcal{A}} \otimes \mathbb{C}^n$ as:
\begin{equation} \label{eq: tH}
    \tH := I_{\mathcal{A}} \otimes H + i F \otimes K.
\end{equation}
We assume that $F^\dagger=-F$, so that $iF$ is Hermitian. Since $H$ and $K$ are Hermitian,
the operator $\tH$
is Hermitian.

\begin{theorem}[Constrained Moment-Dilation] \label{thm:dilation}
Consider the DAE system \cref{eq:original_dae}. Define the lifted constraint operator $D \coloneqq I_{\mathcal{A}} \otimes C$. Let the dilated state $\Psi(t) \in \mathcal{H}_{\mathcal{A}} \otimes \mathbb{C}^n$ satisfy:
\begin{subequations} \label{eq: dilated_system}
\begin{align}
    \Psi'(t) &= -i \tH \Psi(t) + D^{\dagger} \Lambda(t), \\
    D \Psi (t) &= 0, \\
    \Psi(0) &= \ket{r} \otimes \ket{ \bm  x_0},
\end{align}
\end{subequations}
where $\Lambda(t) := -(D D^{\dagger})^{-1} D (-i \tH) \Psi(t)$. Then, the projection onto the physical space exactly recovers the solution to the original DAE \cref{eq:original_dae}:
\begin{equation}
    (\bra{l} \otimes I) \Psi(t) = \bm x(t), \quad \forall t \ge 0.
\end{equation}
\end{theorem}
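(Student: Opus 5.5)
The plan is to reduce both the original DAE \cref{eq:original_dae} and the dilated system \cref{eq: dilated_system} to projected linear ODEs with explicit exponential solutions, and then compare them term by term in a power series, using the moment-matching property $\bra{l}F^k\ket{r}=1$.

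\textbf{Step 1: the original DAE.} Since $C$ has full row rank, $CC^\dagger$ is invertible. Let $P:=I-C^\dagger(CC^\dagger)^{-1}C$ be the orthogonal projector onto $\ker(C)$. Differentiating $C\bm x(t)=0$ gives $C\bm x'=0$, i.e. $CL\bm x+CC^\dagger\bm\lambda=0$. This forces $\bm\lambda=-(CC^\dagger)^{-1}CL\bm x$, so that $\bm x'=PL\bm x$. Because $\bm x_0\in\ker C$ and the flow of $PL$ preserves $\ker C=\Ran(P)$, we have $\bm x(t)=P\bm x(t)$. Hence $\bm x(t)=e^{tPLP}\bm x_0$.

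\textbf{Step 2: the dilated system.} The same argument applies to \cref{eq: dilated_system}. Here $DD^\dagger=I_{\mathcal A}\otimes CC^\dagger$ is invertible, so $\Lambda$ is well defined. The corresponding projector is
\[
\mathcal P:=I-D^\dagger(DD^\dagger)^{-1}D=I_{\mathcal A}\otimes P.
\]
Substituting the given $\Lambda$ yields $\Psi'=\mathcal P(-i\tH)\Psi$. We check that $D\Psi(0)=\ket r\otimes C\bm x_0=0$, so the constraint is maintained and $\Psi(t)=e^{t\mathcal P(-i\tH)\mathcal P}(\ket r\otimes\bm x_0)$. Using \cref{eq: tH},
\[
\mathcal P(-i\tH)\mathcal P = I_{\mathcal A}\otimes A + F\otimes B,\qquad A:=P(-iH)P,\quad B:=PKP,
\]
and $A+B=PLP$.

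\textbf{Step 3: moment matching.} Since $I_{\mathcal A}$ and $F$ commute, a binomial-type expansion gives
\[
(I\otimes A+F\otimes B)^k=\sum_{w} F^{j(w)}\otimes w(A,B),
\]
where $w$ ranges over words of length $k$ in $A,B$ and $j(w)$ is the number of $B$'s in $w$. Applying $\bra l\otimes I$ and acting on $\ket r\otimes\bm x_0$, each term contributes $\bra l F^{j}\ket r\, w(A,B)\bm x_0=w(A,B)\bm x_0$. Summing over words gives
\[
(\bra l\otimes I)(I\otimes A+F\otimes B)^k(\ket r\otimes \bm x_0)=(A+B)^k\bm x_0 .
\]
Summing $t^k/k!$ over $k$ then yields $(\bra l\otimes I)\Psi(t)=e^{tPLP}\bm x_0=\bm x(t)$.

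\textbf{Main obstacle.} The delicate step is interchanging $\bra l\otimes I$ with the exponential series. This is immediate when $\mathcal H_{\mathcal A}$ is finite-dimensional or $F$ is bounded. However, $\bra l F^k\ket r=1$ for all $k$ with $F$ anti-Hermitian typically forces an infinite-dimensional ancilla and unbounded $F$ (e.g.\ a derivative or multiplication operator). In that case I would first try an alternative to the power series. Define $\bm y(t):=(\bra l\otimes I)\Psi(t)$ and show directly that $\bm y$ satisfies $\bm y'=PLP\bm y$ with $\bm y(0)=\bm x_0$; uniqueness for this finite-dimensional linear ODE then gives $\bm y=\bm x$. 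This route requires $\bra l$ to act appropriately through $F$, e.g.\ an eigen-relation $F^\dagger\ket l=\ket l$ in a weak or distributional sense. That relation is what the moment condition encodes in concrete constructions such as those of \cite{li2025linear}. Failing that, I would assume enough analyticity (say $\ket r$ an analytic vector for $F$) to justify the series term by term.
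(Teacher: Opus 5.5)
Your proposal is correct and follows essentially the same route as the paper: eliminate the multipliers to get $\bm x'=\Pi L\bm x$ and $\Psi'=P(-i\tH)\Psi$ with $P=I_{\mathcal A}\otimes\Pi$, expand the exponential, and collapse each term using $\bra{l}F^m\ket{r}=1$. The paper works with $\Pi L$ rather than your $\Pi L\Pi$, which is harmless because the trajectory stays in $\ker C$; it also performs the term-by-term interchange of $\bra{l}\otimes I$ with the series formally and never raises the unbounded-$F$ issue you flag, so your caveat adds rigor rather than departing from the paper's argument.
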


\begin{proof}
To prove that $(\bra{l} \otimes I) \Psi(t) = x(t)$, we demonstrate that the projected dilated state satisfies the same evolution and initial conditions as the original DAE.

The original DAE is given by $\bm x' = L\bm x + C^\dagger \lambda$ with $C\bm x=0$. Differentiating the constraint $C\bm x(t)=0$ yields $C\bm x'(t) = 0$. Since $C$ has full row rank, $CC^\dagger$ is invertible. Solving the evolution equation, we have,
\begin{equation}
    C(L\bm x + C^\dagger \lambda) = 0 \implies \lambda = -(CC^\dagger)^{-1}CL\bm x.
\end{equation}
Substituting $\lambda$ back into the ODE, we obtain the projected evolution:
\begin{equation} \label{eq: reduced_piL}
\bm x' = \left( I - C^\dagger(CC^\dagger)^{-1}C \right) L\bm x = \Pi L \bm x, \end{equation}
where $\Pi \coloneqq I - C^\dagger(CC^\dagger)^{-1}C$ is the orthogonal projector onto $\ker(C)$. The solution is 
\begin{equation} \bm x(t) = e^{\Pi L t} \bm x_0. \end{equation}
The dilated system is $\Psi'(t) = -i \tH \Psi(t) + D^\dagger \Lambda(t)$ with $D\Psi(t) = 0$. Note that $DD^\dagger = I_A \otimes (CC^\dagger)$ is invertible.
By the same logic as above, the effective evolution on the dilated space is:
\begin{equation} \Psi'(t) = P (-i \tH) \Psi(t), \quad P \coloneqq \IAd\otimes I - D^\dag (D D^\dag)^{-1}D. \end{equation}
The solution is $\Psi(t) = e^{P (-i \tH) t} (\ket{r} \otimes x_0)$.
Furthermore, substituting $D= \IAd \otimes C$ and using the properties of tensor product, we have,
\begin{equation}
    \begin{aligned}
    P &= (\IAd \otimes I) - (\IAd \otimes C)^\dagger \left( (\IAd \otimes C)(\IAd \otimes C)^\dagger \right)^{-1} (\IAd \otimes C) \\
    &= (\IAd \otimes I) - (\IAd \otimes C^\dagger) (\IAd \otimes CC^\dagger)^{-1} (\IAd \otimes C) \\
    &= (\IAd \otimes I) - (\IAd \otimes C^\dagger) (\IAd \otimes (CC^\dagger)^{-1}) (\IAd \otimes C) \\
    &= \IAd \otimes I - \IAd \otimes (C^\dagger (CC^\dagger)^{-1} C) \\
    &= \IAd \otimes (I - C^\dagger (CC^\dagger)^{-1} C) = \IAd \otimes \Pi.    
\end{aligned}
\end{equation}

Let $\bm y(t) = (\bra{l} \otimes I) \Psi(t)$. Expanding the exponential propagator:
\begin{equation}\bm  y(t) = (\bra{l} \otimes I) \sum_{k=0}^{\infty} \frac{t^k}{k!} \left[ P (-i \tH) \right]^k (\ket{r} \otimes \bm x_0). \end{equation}
By expanding the operator $P (-i \tH)$, we obtain
\begin{equation} P (-i \tH) = \IAd \otimes \Pi (\IAd\otimes (-iH) + F\otimes K ) = \IAd \otimes (-i\Pi H) + F\otimes \Pi K . \end{equation}
Then any term in the expansion of $[ P (-i \tH) ]^k (\ket{r} \otimes \bm  x_0)$ takes the form:
\begin{equation} (F^m \ket{r}) \otimes (\mathcal{A}_{k,m} \bm x_0), \end{equation}
where $\mathcal{A}_{k,m}$ is a product of $k$ operators consisting of $m$ factors of $\Pi K$ and $(k-m)$ factors of $-i \Pi H$.

Applying $(\bra{l} \otimes I)$ and utilizing the property $\bra{l} F^k \ket{r} = 1$ for all $k \ge 0$:
\begin{equation}(\bra{l} \otimes I) \left( F^m \ket{r} \otimes \mathcal{A}_{k,m} \bm x_0 \right) = \langle l | F^m | r \rangle \mathcal{A}_{k,m} \bm x_0 = \mathcal{A}_{k,m} \bm x_0.\end{equation}
This implies that the projection $(\bra{l} \otimes I)$ maps the dilated generator $P(-i\tH)$ back into the physical generator:
\begin{equation} (\bra{l} \otimes I) [P (-i \tH)]^k (\ket{r} \otimes \bm x_0) = (-i \Pi H + \Pi K)^k \bm x_0 = (\Pi L)^k \bm x_0. \end{equation}
Summing the series gives the desired equality:
\begin{equation}\bm  y(t) = (\bra{l} \otimes I) \sum_{k=0}^{\infty} \frac{t^k}{k!} \left[ P (-i \tH) \right]^k (\ket{r} \otimes \bm x_0) = \sum_{k=0}^{\infty} \frac{t^k}{k!} (\Pi L)^k \bm x_0 = e^{\Pi L t} \bm x_0 = \bm x(t). \end{equation}
\end{proof}

\begin{Prop} \label{prop: QZE}
    The DAE in \cref{eq:original_dae} can be dilated into the block form of \cref{eq: dilated_system}:
    \begin{equation}
    \mat{ I & 0 \\ 0 & 0 } \mat{ \Psi' \\ \Lambda'} = 
    -i \mat{\tH & i D^{\dagger} \\ -i D & 0 } 
    \mat{ \Psi \\ \Lambda }.
    \end{equation}
    Identifying $P := I - D^\dag(D D^\dag)^{-1} D$ as the orthogonal projector onto $\ker(D)$, the evolution simplifies to:
    \begin{equation} \label{eq:zeno_form}
        \Psi' = -i P\tH P \Psi, \quad \Psi(0) \in \text{Range}(P).
    \end{equation}
\end{Prop}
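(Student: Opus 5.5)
The block form is verified row by row, and the Zeno form follows from the elimination of $\Lambda$ already done in the proof of \cref{thm:dilation}.

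\textbf{Block form.} Multiply out the right-hand side. The first block row gives $-i\tH\Psi + (-i)(iD^\dagger)\Lambda = -i\tH\Psi + D^\dagger\Lambda$, and the left-hand side is $\Psi'$. This is the evolution equation in \cref{eq: dilated_system}. The second block row gives $0 = (-i)(-i)D\Psi = -D\Psi$, which is the constraint $D\Psi=0$. Hence the block system is an equivalent rewriting of \cref{eq: dilated_system}.

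\textbf{Eliminating the multiplier.} As in the proof of \cref{thm:dilation}, differentiate $D\Psi(t)=0$ to get $D\Psi'=0$. Then apply $D$ to the first row and use that $DD^\dagger=\IAd\otimes CC^\dagger$ is invertible. This gives $\Lambda = -(DD^\dagger)^{-1}D(-i\tH)\Psi$, and substituting back yields $\Psi' = -iP\tH\Psi$. Here $P = I - D^\dagger(DD^\dagger)^{-1}D$.

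\textbf{Properties of $P$.} $P$ is Hermitian and idempotent by direct computation. We have $DP=0$, so $\Ran(P)\subseteq\ker D$. Conversely, $P\bm v=\bm v$ for every $\bm v\in\ker D$. Therefore $P$ is the orthogonal projector onto $\ker(D)$; the proof of \cref{thm:dilation} also shows $P=\IAd\otimes\Pi$.

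\textbf{Zeno form.} The one substantive point is inserting the right-hand $P$. The constraint $D\Psi(t)=0$ holds for all $t$, so $\Psi(t)\in\ker D=\Ran(P)$, which gives $P\Psi(t)=\Psi(t)$. Hence $-iP\tH\Psi = -iP\tH P\Psi$, which is \cref{eq:zeno_form}. The initial condition satisfies $\Psi(0)=\ket{r}\otimes \bm x_0$ with $D\Psi(0)=\ket{r}\otimes C\bm x_0=0$, so $\Psi(0)\in\Ran(P)$.

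\textbf{Well-posedness in the reverse direction.} We should also check that solving the Zeno equation keeps us on the constraint, so that the two formulations are equivalent. Since $P\tH P$ maps into $\Ran(P)$, the operator $e^{-itP\tH P}$ leaves $\Ran(P)$ invariant. Thus a solution of \cref{eq:zeno_form} with $\Psi(0)\in\Ran(P)$ satisfies $D\Psi(t)=0$ for all $t$. Setting $\Lambda$ by the formula above then recovers a solution of \cref{eq: dilated_system}.

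Because $P\tH P$ is Hermitian, the evolution is unitary on $\Ran(P)$. The only care needed is justifying $D\Psi'=0$, which is legitimate since $\Psi$ is differentiable and $D$ is constant.
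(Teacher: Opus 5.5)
Your proof is correct and follows the same route as the paper. The paper gives no separate proof of this proposition; it relies on the multiplier elimination in the proof of \cref{thm:dilation} (giving $\Psi'=-iP\tH\Psi$ with $P=\IAd\otimes\Pi$), on $P\Psi=\Psi$ along the constraint to insert the right-hand projector, and on the invariance of $\Ran(P)$ noted in \cref{sec: QZE}. Your row-by-row check of the block form and your explicit reconstruction of $\Lambda$ in the reverse direction simply spell out these implicit steps.
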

Starting from the form in \cref{eq:zeno_form}, the dynamics can be implemented by leveraging the Quantum Zeno Effect (QZE). This phenomenon is well-established within quantum mechanics and quantum information theory. We provide a more detailed pedagogical treatment in \cref{sec: QZE}.

Thus far, we have first dilated the constrained DAE and then reduced it to a projected Hamiltonian evolution. In what follows, we demonstrate that this procedure is equivalent to first reducing the original DAE to an unconstrained system via the Schur complement and then dilating the resulting operator into the larger Hilbert space. This equivalence confirms that the dilation and projection operations commute, as formalized in the following proposition and \cref{fig: dilation_commute}.

\begin{Prop}\label{prop:schur_equiv}
The constrained DAE \cref{eq:original_dae} can be simplified to the reduced evolution
\begin{equation}\label{eq: reduced_schur}
    \bm x'(t)=\Pi L\Pi \bm x(t),\qquad \bm x(0)=\bm x_0\in\Ran(\Pi).
\end{equation}
Moreover, the dilated constrained system \cref{eq:zeno_form} is the moment-matching dilation of the reduced operator $\Pi L\Pi$ on the physical subspace.
\end{Prop}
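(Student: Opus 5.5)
The proof has two parts: first reduce the DAE to $\Pi L\Pi$, then identify the dilated Zeno generator as the moment-matching dilation of that reduced operator.

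\textbf{Step 1: the reduced evolution.} This repeats the computation in the proof of \cref{thm:dilation}. Differentiating $C\bm x=0$ and using invertibility of $CC^\dagger$ gives $\bm\lambda=-(CC^\dagger)^{-1}CL\bm x$. Substituting back yields $\bm x'=\Pi L\bm x$, as in \cref{eq: reduced_piL}. Since $C\bm x(t)=0$ for all $t$, we have $\bm x(t)\in\ker C=\Ran(\Pi)$, hence $\bm x=\Pi\bm x$ and $\Pi L\bm x=\Pi L\Pi\bm x$.

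To avoid circularity, argue directly that $\Ran(\Pi)$ is invariant under $\Pi L\Pi$, which is immediate because $\Pi$ is idempotent. Then $e^{t\Pi L\Pi}\bm x_0\in\Ran(\Pi)$. On $\Ran(\Pi)$ the two generators $\Pi L$ and $\Pi L\Pi$ agree, so the power series give $(\Pi L)^k\bm x_0=(\Pi L\Pi)^k\bm x_0$ by induction on $k$. Hence $e^{t\Pi L}\bm x_0=e^{t\Pi L\Pi}\bm x_0$, which is \cref{eq: reduced_schur}.

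\textbf{Step 2: the dilation statement.} Split the reduced operator into Hermitian and anti-Hermitian parts. Because $\Pi$ is an orthogonal projector, $\Pi^\dagger=\Pi$, so
\[
(\Pi L\Pi)^\dagger=\Pi L^\dagger\Pi .
\]
Consequently the parts are $\tfrac{i}{2}\bigl(\Pi L\Pi-\Pi L^\dagger\Pi\bigr)=\Pi H\Pi$ and $\tfrac12\bigl(\Pi L\Pi+\Pi L^\dagger\Pi\bigr)=\Pi K\Pi$, so
\[
\Pi L\Pi=-i\,\Pi H\Pi+\Pi K\Pi .
\]
Its moment-matching dilation, by the construction \cref{eq: tH}, is
\[
\IAd\otimes\Pi H\Pi+iF\otimes\Pi K\Pi .
\]
Using $P=\IAd\otimes\Pi$, established in the proof of \cref{thm:dilation}, and the mixed-product rule for tensor products, compute
\[
P\tH P=(\IAd\otimes\Pi)(\IAd\otimes H+iF\otimes K)(\IAd\otimes\Pi)=\IAd\otimes\Pi H\Pi+iF\otimes\Pi K\Pi .
\]
This is exactly the dilation of $\Pi L\Pi$. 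The initial datum $\ket{r}\otimes\bm x_0$ lies in $\Ran(P)$ because $\bm x_0\in\Ran(\Pi)$. Thus \cref{eq:zeno_form} is the dilation of \cref{eq: reduced_schur}, i.e.\ the diagram commutes.

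\textbf{Consistency check.} Applying the moment argument of \cref{thm:dilation} to the generator $-iP\tH P$ gives $(\bra{l}\otimes I)e^{-itP\tH P}(\ket r\otimes\bm x_0)=e^{t\Pi L\Pi}\bm x_0$. This agrees with Step 1.

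\textbf{Main obstacle.} There is no hard estimate here. The only delicate point is Step 1: one must justify replacing $\Pi L$ by $\Pi L\Pi$ without assuming the solution already stays on the constraint space. The invariance argument above handles this.
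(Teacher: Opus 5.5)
Your proposal is correct and follows essentially the same route as the paper. Step 1 reproduces the paper's reduction $\bm x'=\Pi L\bm x=\Pi L\Pi\bm x$, with your invariance argument only making explicit why the solution stays in $\Ran(\Pi)$, and the paper likewise computes $P(-i\tH)P=\IAd\otimes(-i\Pi H\Pi)+F\otimes\Pi K\Pi$; the one shift in emphasis is that the paper treats the moment expansion giving $(\bra{l}\otimes I)e^{-itP\tH P}(\ket{r}\otimes\bm x_0)=e^{t\Pi L\Pi}\bm x_0$ as the main content of the claim, whereas you lead with identifying $\Pi H\Pi$ and $\Pi K\Pi$ as the operators $H,K$ of the splitting $\Pi L\Pi=-i\,\Pi H\Pi+\Pi K\Pi$ and relegate that expansion to a consistency check.
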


\begin{proof}
Recall \cref{eq: reduced_piL} in the proof of \cref{thm:dilation},
\begin{equation}
\bm x'(t)=\bigl(I-C^\dagger(CC^\dagger)^{-1}C\bigr)L\bm x(t)=\Pi L\bm x(t).
\end{equation}
Since $\bm x(t)\in\ker(C)=\Ran(\Pi)$ for all $t$, we also have $\Pi x(t)=x(t)$, hence
\begin{equation}
\bm x'(t)=\Pi L\Pi \bm x(t).
\end{equation}
Now recall that
\begin{equation}
\widehat H = I_{\mathcal A}\otimes H + iF\otimes K,
\qquad
L=-iH+K.
\end{equation}
Because $P=I_{\mathcal A}\otimes \Pi$, we have
\begin{equation}
P(-i\widehat H)P
=
(\IAd\otimes \Pi)\bigl(\IAd\otimes (-iH)+F\otimes K\bigr)(I_{\mathcal A}\otimes \Pi) =
\IAd\otimes (-i\Pi H\Pi)+F\otimes \Pi K\Pi.
\end{equation}
Hence the expansion of $(P(-i\widehat H)P)^k(|r\rangle\otimes \bm x_0)$ is a sum of terms of the form $F^m|r\rangle\otimes \mathcal B_{k,m} \bm x_0,$ where $\mathcal B_{k,m}$ is a product consisting of $m$ factors of $\Pi K\Pi$
and $k-m$ factors of $-i\Pi H\Pi$.
 By applying $(\langle l|\otimes I)$ and using the moment condition
$\langle l|F^m|r\rangle =1$ for any nonnegative integer $m$, it
gives
\begin{equation}
(\langle l|\otimes I)\,(P(-i\widehat H)P)^k\,(|r\rangle\otimes \bm x_0) = (-i\Pi H\Pi+\Pi K\Pi)^k \bm x_0
= (\Pi L \Pi)^k \bm x_0
\end{equation}
Therefore, under a same dilation framework, we would obtain
\begin{equation} (\bra{l}\otimes I)e^{-it P\tH P}\ket{r}\otimes \bm x_0 = e^{t\, \Pi L\Pi}\bm x_0. \end{equation}
\end{proof}

Under a fixed dilation framework, we may denote the dilation map $\mathfrak{D}: \ker(C)\to \Anc\otimes\ker(C)$ and the map back to physical Hilbert space be $\mathfrak{R}: \Anc\otimes \ker(C)\to \ker(C)$. This setup yields the commutative relationship illustrated in \cref{fig: dilation_commute}.

\begin{figure}[ht]
\begin{equation}
\begin{tikzcd}[column sep=huge, row sep=huge]
\left\{
\begin{array}{l}
\bm x'(t)=L\bm x(t)+C^\dagger \bm \lambda(t),\\
C\bm x(t)=0
\end{array}
\right.
\arrow[r, "\text{Schur reduction}"]
\arrow[d, shift left=0.9ex, "\mathfrak D"]
&
x'(t)=\Pi L\Pi x(t)
\arrow[d, shift left=0.9ex, "\mathfrak D"]
\\
\left\{
\begin{array}{l}
\Psi'(t)=-i\widehat H\Psi(t)+D^\dagger \Lambda(t),\\
D\Psi(t)=0
\end{array}
\right.
\arrow[u, shift left=0.9ex, "\mathfrak R"]
\arrow[r, "\text{Schur reduction}"']
&
i\Psi'(t)=P\widehat H P\,\Psi(t)
\arrow[u, shift left=0.9ex, "\mathfrak R"]
\end{tikzcd}
\end{equation}
\caption{Equivalence between the dilation of the Schur complement system and the full DAE. $\mathfrak D$ refers to the dilation mapping while $\mathfrak R$ indicates the opposite, recovery  map.  }
    \label{fig: dilation_commute}
\end{figure}

We now provide a specific finite-dimensional truncated moment dilation choice from \cite{li2025linear}. It is worthwhile to mention that there exists other dilation choices, we refer readers to \cite{li2025linear} for more examples. Suppose the discrete ancilla Hilbert space $\Anc$ has dimension $M+1$. For the discrete SBP approximation on a uniform grid $p_j=j\delta $, $\delta =\frac{1}{M}$, with trapezoidal weights 
\begin{equation}
W=\operatorname{diag}(w_0,\dots,w_M),\qquad
w_0=w_M=\frac{\delta }{2},\quad w_j=\delta  \ \ (1\le j\le M-1),
\end{equation}
then define
\begin{equation}
F_{\delta }=
\mat{
0 & \frac{1}{4\sqrt2} & 0 & 0 & \cdots & 0 \\
-\frac{1}{4\sqrt2} & 0 & \frac34 & 0 & \cdots & 0 \\
0 & -\frac34 & 0 & \frac54 & \cdots & 0 \\
0 & 0 & -\frac54 & 0 & \ddots & \vdots \\
\vdots & \vdots & \vdots & \ddots & \ddots & \frac{2M-1}{4} \\
0 & 0 & 0 & \cdots & -\frac{2M-1}{4} & 0
}.
\end{equation}
Thus for \(\theta=\frac12\), the discrete ancilla operator used in the dilation is $\theta F_{\delta }=\frac12\,F_{\delta }$. Let $\beta=\frac1\theta-\frac12=\frac32.$ Then we take
\begin{equation}
\ket{r_{\delta }}
=
Z^{-1}\sum_{j=0}^M p_j^{3/2}\sqrt{w_j}\,\ket{j},
\qquad
Z=\left(\sum_{j=0}^M w_j p_j^3\right)^{1/2}.
\end{equation}
For $0< j*<M$ for some $j^*$, we pick
\begin{equation} \bra{l_{\delta }} = \frac{1}{\bra{j^*}\ket{r_{\delta }}}\bra{j^\ast}.
\end{equation}
The moment-matching pair is then $\left(\left(\bra{l_{\delta }}, \ket{r_{\delta }}, \frac{1}{2}F_{\delta }\right)\right)$, which satisfies exact moments up to order $M-j_\ast-1.$

\subsection{Quantum Zeno Form and Projective Dynamics}
\label{sec: QZE}

We now explain the connection between the projected Hamiltonian formulation derived above and
quantum Zeno dynamics. The quantum Zeno effect refers to the suppression of transitions out of a
subspace by repeated measurements or projections. In its dynamical form, if $P$ is an orthogonal
projector and $H$ is Hermitian, then, in finite dimension,
\begin{equation}
\label{eq:Zeno_limit}
    \lim_{N\to\infty}\Bigl(Pe^{-it\tH/N}P\Bigr)^N
    =
    e^{-itP\tH P}P .
\end{equation}
Thus repeated restriction to $\Ran(P)$ produces an effective Hamiltonian evolution generated by
the compressed Hamiltonian $P\tH P$ on the projected subspace; see, e.g.,
\cite{dizaji2024hamiltonian,facchi2008quantum,hahn2022unification}. The same confinement mechanism has also been realized and studied through repeated or weak
measurement protocols in several experimental and control settings
\cite{Schaefer2014,Kalb2016,Harrington2017,Sorensen2018}.

This is precisely the structure obtained from the constrained DAE after dilation. The original
system
\begin{equation}
    x'(t)=Lx(t)+C^\dagger\lambda(t),\qquad Cx(t)=0,
\end{equation}
evolves only on the constraint subspace $\ker(C)$. After dilation, the constraint becomes
\begin{equation}
    D\Psi(t)=0,\qquad D=I_{\mathcal A}\otimes C,
\end{equation}
and the orthogonal projector onto the lifted admissible subspace is
\begin{equation}
    P=I-D^\dagger(DD^\dagger)^{-1}D.
\end{equation}
By \cref{prop: QZE}, the dilated constrained dynamics are equivalently
\begin{equation}
\label{eq:projected_dynamics}
    i\Psi'(t)=P\tH P\,\Psi(t),
    \qquad
    \Psi(0)\in\Ran(P).
\end{equation}
Since $\tH $ is Hermitian, the operator $P\tH P$ is Hermitian on
$\Ran(P)$, and therefore \cref{eq:projected_dynamics} defines a unitary evolution on the
constraint subspace:
\begin{equation}
    \Psi(t)=e^{-itP\tH P}\Psi(0).
\end{equation}
Moreover, the constraint is automatically preserved. Indeed,
\begin{equation}
    (I-P)\Psi'(t)= -i(I-P)P\tH P\Psi(t)=0,
\end{equation}
so if $\Psi(0)\in\Ran(P)$, then $\Psi(t)\in\Ran(P)=\ker(D)$ for all $t\ge0$.

The Zeno viewpoint gives two complementary interpretations. In the compressed-Hamiltonian
viewpoint, one directly simulates the effective Hamiltonian $P\tH  P$. In the
repeated-projection viewpoint, the same dynamics arise formally as
\begin{equation}
\label{eq:Zeno_dilated}
    \Psi(t)
    =
    \lim_{N\to\infty}
    \Bigl(Pe^{-it\tH /N}P\Bigr)^N\Psi(0).
\end{equation}
Thus the role of the Lagrange multiplier in the original DAE is replaced, after dilation, by a
projection mechanism that continuously confines the state to the admissible subspace. The physical
solution is then recovered by the moment projection
\begin{equation}
    x(t)=(\langle l|\otimes I)\Psi(t).
\end{equation}

This connection is useful both conceptually and algorithmically. Conceptually, it identifies
constrained DAE dynamics with quantum Zeno-type projected evolution. Algorithmically, it suggests
two possible routes: one may approximate the repeated-projection formula
\eqref{eq:Zeno_dilated}, or one may coherently construct a block-encoding of the compressed
Hamiltonian $P\tH  P$ and apply Hamiltonian simulation. In the remainder of this paper, we
focus on the latter, fully coherent block-encoding approach.

\section{Implementation of the Quantum Algorithms}

The dilation developed in \cref{section: math_form} leads to the projected Schr\"odinger dynamics \cref{eq:zeno_form}. We now discuss how $P$ and $\tH$ may be implemented on a quantum computer efficiently.

If \(P\widetilde H P\) admits an explicit spectral representation, this structure should be used
directly. For periodic, constant-coefficient problems, a quantum Fourier transform diagonalizes the
discrete Laplacian and the Leray projector acts mode by mode, reducing the dynamics to spectral
multipliers. This is closely related to the quantum spectral framework for PDEs proposed in
\cite{HuangAntonioliBarbaresco2026}. Here we focus on the more general constrained DAE setting,
motivated by Stokes discretizations on domains and with boundary conditions for which such global
Fourier diagonalization is not available.

\paragraph{Block-encoding of the lifted constraint.}
Assume that we have oracle access, or an efficient block-encoding, of the constraint matrix $C$.
Since $D=\IAd\otimes C$, a block-encoding of $D$ is obtained immediately by tensoring the
block-encoding of $C$ with the identity on the ancilla register. Thus, if $U_C$ is an
$(\alpha_C,a,\varepsilon_C)$ block-encoding of $C$, then
\begin{equation}
U_D := I_A\otimes U_C
\end{equation}
is an $(\alpha_C,a,\varepsilon_C)$ block-encoding of $D$.

Because $P$ is the orthogonal projector onto $\ker(D)$, it can be viewed spectrally as the
indicator of the singular value $0$ of $D$. More precisely, if
\begin{equation}
D = \sum_j \sigma_j \, |u_j\rangle \langle v_j|,
\end{equation}
then
\begin{equation}
P = \sum_{\sigma_j=0} |v_j\rangle\langle v_j|
    = I - D^\dagger(DD^\dagger)^{-1}D.
\end{equation}
Hence the implementation of $P$ reduces to separating the zero singular-value subspace from the
nonzero singular-value subspace of $D$.

\paragraph{Implementation of $P$ via QSVT.}
A natural route is to use quantum singular value transformation (QSVT) on a block-encoding of
$D$ \cite{gilyen2019quantum}. After rescaling, we may assume that the singular values of $D/\alpha_C$ lie in $[0,1]$.
Let the nonzero singular values of $D$ are bounded below by a gap $\gamma>0$,
namely
\begin{equation}
\sigma_j(D)\in \{0\}\cup [\gamma,\alpha_C].
\end{equation}
The key idea is to choose a polynomial $p(\cdot)$ that approximates the step function,
\begin{equation} \label{eq: pulse}
p(x)\approx
\begin{cases}
1, & x=0,\\
0, & x\in [\kappa^{-1},1].
\end{cases}
\end{equation}
Here $\kappa \coloneqq \alpha_C/\gamma$ is the ratio between the largest and smallest singular values, and therefore is interpreted as the condition number of $D$. Applying QSVT to a block-encoding of $D/\alpha_C$ implements a singular-value transformation
which approximates the right singular-vector projector $P.$
Using polynomial approximation properties for the pulse function \cref{eq: pulse}, the degree of the polynomial $p$ can be estimated in terms of the precision $\varepsilon$ and the condition number \cite{gilyen2019quantum}.  

\paragraph{Implementation of the dilated Hamiltonian.}
We next turn to the Hermitian operator
\begin{equation}
\tH = I_A\otimes H + iF\otimes K.
\end{equation}
Assume that the original generator $L$ admits an efficient block-encoding. Since
\begin{equation}
H=\frac{i(L-L^\dagger)}{2},
\qquad
K=\frac{L+L^\dagger}{2},
\end{equation}
block-encodings of $H$ and $K$ follow from linear combinations of the block-encodings of $L$ and
$L^\dagger$. Thus, the non-Hermitian generator is separated into a skew-Hermitian part, encoded
by $H$, and a non-Hermitian  part, encoded by $K$, exactly as required by the dilation
construction.

Now, suppose that $U_H$ and $U_K$ are block-encodings of $H$ and $K$, respectively. Because the ancilla operator $F$ is sparse and highly structured, it is straightforward to block-encode; let $U_F$ denote this block-encoding. The term $\IAd \otimes H$ is implemented by the tensor product of $U_H$ and $\IAd$, while $iF \otimes K$ is realized via the tensor product of $U_F$ and $U_K$ together with the scalar phase factor $i$. The total operator $\tH$ is then obtained through a standard linear-combination of unitaries. Consequently, once $L$ and $F$ are efficiently accessible, so is $\tH$.

\paragraph{Projected Hamiltonian simulation.}
After constructing block-encodings of both $P$ and $\tH$, one may proceed in either of
the two ways already suggested by the quantum Zeno form. First, one may directly block-encode $P\tH P$ by composing the block encodes of $P$ and $\tH$, and then simulate the unitary evolution $e^{-itP\tH P}.$
Since $P$ and $\tH$ are Hermitian, the operator $P\tH P$ is Hermitian on the full
space and acts as the effective Hamiltonian on the invariant subspace $\Ran(P)$.

Second, one may use the Zeno product formula \cref{eq:Zeno_limit}, which realizes the constrained dynamics by repeated short-time evolution under $\tH$
interlaced with projection back to the constraint subspace.

\begin{theorem}[Complexity of simulating projected Hamiltonian dynamics]
\label{thm:end_to_end_complexity}
Let $\tH$ be Hermitian, let $D$ be a constraint operator, and let
\begin{equation}
P = I - D^\dagger (DD^\dagger)^{-1} D
\end{equation}
denote the orthogonal projector onto $\ker(D)$. Consider the projected Schr\"odinger equation
\begin{equation}
i\Psi'(t)=P\tH P\,\Psi(t), \qquad \Psi(0)=\Psi_0,
\end{equation}
with $\Psi_0\in\Ran(P)$. Assume that
\begin{enumerate}[label=(\alph*)]
    \item there is an $(\alpha_{\tH},a_{\tH},\eta_{\tH})$ block-encoding of $\tH$;
    \item there is an $(\alpha_D,a_D,\eta_D)$ block-encoding of $D$;
    \item the nonzero singular values of $D$ satisfy the gap condition
    \begin{equation}
    \sigma(D)\subset \{0\}\cup[\gamma,\alpha_D]
    \qquad \text{for some }\gamma>0;
    \end{equation}
    \item one use of the block-encodings of $\tH$ and $D$ has gate cost
    $T_{\tH}$ and $T_D$, respectively.
\end{enumerate}
Let $t> 0$ and $\varepsilon\in(0,1)$. If the block-encoding errors are chosen so that
\begin{equation}
\eta_{\tH}=O\!\left(\frac{\varepsilon}{t}\right),
\qquad
\eta_P=O\!\left(\frac{\varepsilon}{\alpha_{\tH} t}\right),
\end{equation}
where $\eta_P$ is the target error in the block-encoding of $P$, then there exists a quantum
algorithm that outputs a state $\ket{\Phi}$ satisfying
\begin{equation}
\bigl\|\ket{\Phi}-e^{-itP\tH P}\ket{\Psi_0}\bigr\|\le \varepsilon,
\end{equation}
with query complexity to the block-encodings 
\begin{equation}
O\!\left( \left( \alpha_{\tH} t + \frac{\log(1/\varepsilon)}
{\log\!\left(e+\frac{\log(1/\varepsilon)}{\alpha_{\tH} t}\right)} \right) (1+\mathfrak p) \right),
\end{equation}
where
\begin{equation}
\mathfrak p = O\!\left( \frac{\alpha_D}{\gamma} \log\!\frac{1}{\eta_P} \right) = O\!\left( \kappa(D) \log\!\frac{\alpha_{\tH}t}{\varepsilon} \right),
\end{equation}
and the block-encoding of $D$ is chosen sufficiently accurate: $\eta_D= O(\eta_p/\mathfrak p)$. 

The corresponding gate complexity is
\begin{equation} O\!\left( \left( \alpha_{\tH} t + \frac{\log(1/\varepsilon)}
{\log\!\left(e+\frac{\log(1/\varepsilon)}{\alpha_{\tH} t}\right)} \right) \bigl(T_{\tH}+\mathfrak p\,T_D\bigr) \right),
\end{equation}
up to standard ancilla overheads.
\end{theorem}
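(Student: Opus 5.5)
The plan is to assemble the algorithm from the primitives in \cref{sec: prelim} in three stages: first build a block-encoding of $P$, then a block-encoding of $P\tH P$, and finally run optimal Hamiltonian simulation on it. At the end we add up the errors.

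\emph{Step 1 (projector).} Starting from the $(\alpha_D,a_D,\eta_D)$ block-encoding of $D$, we apply QSVT with the even "pulse" polynomial of the approximate-projector lemma, of degree $\mathfrak p=O\!\left(\frac{\alpha_D}{\gamma}\log\frac{1}{\eta_P}\right)$. The gap assumption (c) ensures that every nonzero singular value of $D/\alpha_D$ lies in $[\gamma/\alpha_D,1]$. Hence the polynomial is within $\eta_P$ of the indicator of the zero singular value on the spectrum. Because $P=I-D^\dagger(DD^\dagger)^{-1}D$ is exactly the right-singular projector onto $\ker D$, this produces a $(1,\cdot,\eta_P')$ block-encoding $U_P$ of $P$ using $\mathfrak p$ queries to $U_D$ and $U_D^\dagger$.

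The error $\eta_P'$ has two parts: the polynomial approximation error $\eta_P$ and the propagated block-encoding error. A degree-$\mathfrak p$ bounded polynomial amplifies input error by at most $O(\mathfrak p)$ in the robust QSVT bound. Choosing $\eta_D=O(\eta_P/\mathfrak p)$ therefore makes $\eta_P'=O(\eta_P)$. Substituting $\eta_P=\varepsilon/(\alpha_{\tH}t)$ gives $\mathfrak p=O(\kappa(D)\log(\alpha_{\tH}t/\varepsilon))$.

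\emph{Step 2 (compression).} We apply the product rule of \cref{lem:BE-lcu}(b) twice to $U_P$, $U_{\tH}$, $U_P$. This gives an $(\alpha_{\tH},\cdot,\eta)$ block-encoding of $P\tH P$ with
\begin{equation}
\eta=O(\eta_{\tH}+\alpha_{\tH}\eta_P)=O(\varepsilon/t).
\end{equation}
Each use costs one query to $U_{\tH}$ and $2\mathfrak p$ queries to $U_D$. Its gate cost is therefore $O(T_{\tH}+\mathfrak p T_D)$, and its query cost is $O(1+\mathfrak p)$.

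\emph{Step 3 (simulation and error budget).} We invoke \cref{lem: optimal sim} with normalization $\alpha_{\tH}$ on the Hermitian operator $P\tH P$. This implements $e^{-itP\tH P}$ to accuracy $\varepsilon/2$ using
\begin{equation}
O\!\left(\alpha_{\tH}t+\frac{\log(1/\varepsilon)}{\log(e+\log(1/\varepsilon)/(\alpha_{\tH}t))}\right)
\end{equation}
calls to the Step 2 encoding. Multiplying by the per-call costs gives the stated query and gate complexities.

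The remaining error is the block-encoding error $\eta$ propagated through the evolution. We bound it by the Duhamel estimate
\begin{equation}
\|e^{-itA}-e^{-itB}\|\le t\|A-B\|,
\end{equation}
which contributes $t\eta=O(\varepsilon)$. The step with $\Psi_0\in\Ran(P)$ then yields the final state. If the target is produced as a block of a larger unitary, we apply it to $\ket{0}\ket{\Psi_0}$ and use the standard argument that the success amplitude is $1-O(\varepsilon)$. This keeps the output state $\varepsilon$-close after constants are adjusted.

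The main obstacle is Step 1's error propagation. The QSVT lemma as quoted says only "up to error determined by the block-encoding error". I would use the robustness bound for QSVT, $\|p(A)-p(\tilde A)\|\le O(\mathfrak p\sqrt{\|A-\tilde A\|})$ in general and linear in $\|A-\tilde A\|$ for the relevant constructions. If only the square-root version is available, $\eta_D$ must be taken as $O((\eta_P/\mathfrak p)^2)$, which changes only logarithmic factors. A secondary subtlety is that the approximate $P$ is not exactly a projector, so $\Ran(P)$ is only approximately invariant. The Duhamel bound sidesteps this by comparing the operators directly.
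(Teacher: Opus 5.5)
Your proposal is correct and follows the paper's proof essentially step for step. It builds a QSVT pulse-polynomial block-encoding $U_P$ of $P$ of degree $\mathfrak p$, forms $U_PU_{\tH}U_P$ as an $(\alpha_{\tH},\cdot,O(\eta_{\tH}+\alpha_{\tH}\eta_P))$ block-encoding of $P\tH P$, and runs optimal Hamiltonian simulation, with each query costing one use of $U_{\tH}$ and two of $U_P$. Your extra bookkeeping (QSVT robustness to justify $\eta_D$, the Duhamel comparison, and postselection on the ancilla) is more careful than the paper, which absorbs these errors into the cited lemmas. The Duhamel step is in fact redundant, since the Hamiltonian-simulation lemma already tolerates block-encoding error of order $\varepsilon/t$.
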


\begin{proof}
By the spectral-gap assumption on $D$, QSVT yields an $\eta_P$-approximate block-encoding of the
orthogonal projector $P$ onto $\ker(D)$ using a polynomial of degree
\begin{equation}
\mathfrak p = O\!\left( \frac{\alpha_D}{\gamma}\log\!\frac{1}{\eta_P} \right), \end{equation}
and hence using $O(\mathfrak p)$ queries to the block-encoding of $D$ and $D^\dagger$.

Let $U_P$ be the resulting $(1,a_P,\eta_P)$ block-encoding of $P$, and let
$U_{\tH}$ be the given $(\alpha_{\tH},a_{\tH},\eta_{\tH})$
block-encoding of $\tH$. By two applications of the product rule for block-encodings,
the composition $U_P\,U_{\tH}\,U_P $
is an $(\alpha_{\tH},a,\eta_{P\tH P})$ block-encoding of $P\tH P$, where $\eta_{P\tH P}
=O\!\left(\eta_{\tH}+\alpha_{\tH}\eta_P\right).$ Choosing
\begin{equation}
\eta_{\tH}=O\!\left(\frac{\varepsilon}{t}\right),
\qquad
\eta_P=O\!\left(\frac{\varepsilon}{\alpha_{\tH}t}\right),
\end{equation}
gives
\begin{equation}
\eta_{P\tH P}=O\!\left(\frac{\varepsilon}{t}\right).
\end{equation}

Applying optimal Hamiltonian simulation to this approximate block-encoding of the Hermitian
operator $P\tH P$ yields an implementation of
\begin{equation} e^{-itP\tH P} \end{equation}
with error at most $\varepsilon$ and query complexity
\begin{equation}
O\!\left( \alpha_{\tH} t + \frac{\log(1/\varepsilon)}
{\log\!\left(e+\frac{\log(1/\varepsilon)}{\alpha_{\tH} t}\right)}
\right), \end{equation}
where each query to the block-encoding of $P\tH P$ requires one use of
$U_{\tH}$ and two uses of $U_P$. Substituting the cost of $U_P$ gives the stated query
and gate bounds.
\end{proof}

\paragraph{A linear DAE example from circuit theory.}
As a concrete linear DAE that does not arise from a semidiscretized PDE, consider an
$N$-section RLC transmission-line ladder written in modified nodal analysis; see, e.g.,
\cite{HoRuehliBrennan1975,Riaza2008}. We work in the homogeneous case $u(t)\equiv 0$.

Let $v_0(t)$ denote the source-node voltage, let
\begin{equation}
\widehat{\bm  v}(t)=(v_1(t),\dots,v_N(t))^T\in\mathbb{R}^N
\end{equation}
be the ladder-node voltages, and let
\begin{equation}
\bm i(t)=(i_1(t),\dots,i_N(t))^T\in\mathbb{R}^N
\end{equation}
be the branch currents. Define
\begin{equation}
j_s(t):=-i_s(t),
\end{equation}
so that the source-node Kirchhoff law becomes $i_1-j_s=0$.

We take uniform parameters
\begin{equation}
R=0.2,\qquad L_{\mathrm{ind}}=1,\qquad C=1,\qquad G=0.05,
\end{equation}
and introduce
\begin{equation}
e_1=(1,0,\dots,0)^T\in\mathbb{R}^N,
\qquad
K_N=
\begin{bmatrix}
-1 & 1 & 0 & \cdots & 0\\
0 & -1 & 1 & \ddots & \vdots\\
\vdots & \ddots & \ddots & \ddots & 0\\
0 & \cdots & 0 & -1 & 1\\
0 & \cdots & \cdots & 0 & -1
\end{bmatrix}\in\mathbb{R}^{N\times N}.
\end{equation}
Then the dynamics take the form
\begin{align}
\dot{\widehat{\bm v}} &= -G\,\widehat{\bm v}-K_N\,\bm i, \\
\dot{\bm i} &= \bm e_1 v_0+K_N^T\widehat{\bm  v}-R\,\bm i+\lambda_2 \bm e_1, \\
\dot v_0 &= \lambda_1, \\
\dot j_s &= -\lambda_2,
\end{align}
subject to the algebraic constraints
\begin{equation}
v_0=0,\qquad i_1-j_s=0.
\end{equation}

Define
\begin{equation}
\bm x:=
\begin{bmatrix}
v_0\\ \widehat v\\ i\\ j_s
\end{bmatrix}\in\mathbb{R}^{2N+2},
\qquad
\lambda:=
\begin{bmatrix}
\lambda_1\\ \lambda_2
\end{bmatrix}\in\mathbb{R}^2.
\end{equation}
Then the system can be written in the constrained form
\begin{equation}
\bm x'(t)=L_N \bm  x(t)+D_N^\dagger\lambda(t),\qquad D_N \bm x(t)=0,
\end{equation}
with
\begin{equation}
D_N=
\begin{bmatrix}
1 & 0_{1\times N} & 0_{1\times N} & 0\\
0 & 0_{1\times N} & e_1^T & -1
\end{bmatrix},
\end{equation}
and
\begin{equation}
L_N=
\begin{bmatrix}
0 & 0_{1\times N} & 0_{1\times N} & 0\\
0_{N\times 1} & -G I_N & -K_N & 0_{N\times 1}\\
e_1 & K_N^T & -R I_N & 0_{N\times 1}\\
0 & 0_{1\times N} & 0_{1\times N} & 0
\end{bmatrix}.
\end{equation}
Thus the constraint subspace is
\begin{equation}
\ker(D_N)=\left\{x\in\mathbb{R}^{2N+2}: v_0=0,\ i_1-j_s=0\right\},
\end{equation}
and the orthogonal projector onto the physical subspace is
\begin{equation}
P_N=I-D_N^\dagger(D_ND_N^\dagger)^{-1}D_N.
\end{equation}

This family is particularly simple from the standpoint of
\cref{thm:end_to_end_complexity}. Indeed,
\begin{equation}
D_ND_N^\dagger=
\begin{bmatrix}
1 & 0\\
0 & 2
\end{bmatrix},
\end{equation}
so the nonzero singular values of $D_N$ are exactly
\begin{equation}
\sigma(D_N)=\{1,\sqrt{2}\}.
\end{equation}
Hence
\begin{equation}
\gamma=1,\qquad \alpha_D=\sqrt{2},\qquad \frac{\alpha_D}{\gamma}=\sqrt{2}=O(1),
\end{equation}
independently of $N$. In particular, the QSVT cost of implementing the projector $P_N$
depends only logarithmically on the target precision.

The dynamical matrix $L_N$ is also uniformly bounded in $N$. Since $\|e_1\|=1$ and
$\|K_N\|\le 2$, one has
\begin{equation}
\|L_N\|
\le
\|e_1\|+\|K_N\|+\|K_N^T\|+R+G
\le
1+2+2+0.2+0.05,
\end{equation}
and therefore
\begin{equation}
\|L_N\|=O(1),
\end{equation}
again uniformly in $N$. Consequently, the corresponding dilation Hamiltonian
$\tH_N$ also satisfies
\begin{equation}
\|\tH_N\|=O(1),
\end{equation}
up to constant factors coming from the fixed ancilla operator.

Heuristically, if sparse block-encodings of $L_N$ and $D_N$ are available at $O(1)$ query
cost, then \cref{thm:end_to_end_complexity} predicts an overall simulation cost that is
essentially linear in the evolution time,
\begin{equation}
\widetilde O(t),
\end{equation}
up to polylogarithmic dependence on the target precision. In this family, the constraint
enforcement remains inexpensive because $D_N$ has constant condition number, and the leading
dependence is governed by the Hamiltonian-simulation term rather than by the projector
construction.

\section{Application to Unsteady Stokes Flow} \label{sec: Stokes}

In this section, we study the unitary dilation framework for a physically meaningful DAE arising from incompressible fluid flow, namely the Stokes equations. The Stokes system can be viewed as a simplified and linear form of the incompressible Navier--Stokes equations obtained by neglecting the nonlinear convective term. It therefore serves as a natural first model problem: it retains the essential differential-algebraic structure caused by the incompressibility constraint, while avoiding the additional analytical and numerical difficulties introduced by nonlinearity. For this reason, the Stokes equations provide a clean setting in which to formulate the problem as a dilated DAE, analyze the associated operators, and test the accuracy of the dilation approach.

The purpose of recalling these classical estimates is not to provide a new numerical analysis of
Stokes discretizations. Rather, these results establish the baseline accuracy and cost scalings
against which the quantum simulation procedure will be compared. In particular, second-order
spatial and temporal methods suggest the common accuracy balance $\Delta t\sim h$, while optimal
elliptic solvers lead to the standard classical work estimate used later.

Consider an unsteady incompressible Stokes flow for the velocity $\bm u(\bm x,t)$, the pressure field $p(x,t)$ on a domain $\Omega \subset \mathbb{R}^d$, compatible boundary conditions, and external force $\bm f:\Omega\times[0,T]\to \Omega$:
\begin{subequations} \label{eq:stokes_strong_application}
\begin{align}
    \bu_t - \Delta \bu + \nabla p &= \bm f, \\
    \nabla \cdot \bu &= 0.
\end{align}
\end{subequations}
For non-homogeneous, but time-independent boundary data satisfying the compatibility condition
\begin{equation}
\int_{\partial\Omega} g\cdot n\,ds=0,
\end{equation}
one may introduce a divergence-free lifting $\bm y$ satisfying
\begin{equation}
-\Delta \bm y+\nabla p_y=0,\qquad
\nabla\cdot \bm y=0,\qquad
\bm y|_{\partial\Omega}=g.
\end{equation}
Then $\overline{\bm u}=\bm u-\bm y$ satisfies a homogeneous Dirichlet Stokes system, with a
modified pressure and forcing.

\subsection{Regularity and Function Spaces}

Let $\Omega \subset \mathbb{R}^d$ ($d=2,3$) be a bounded domain with boundary $\partial\Omega$, and let $T>0$ be a final time. We use standard Sobolev space notation throughout. For $m \in \mathbb{R}$, the Sobolev space $\H^m(\Omega)$ is equipped with norm $\norm{\cdot}_{m}$. In particular, $\H^0(\Omega)=L^2(\Omega)$, 
with norm $\norm{\cdot}$ and inner product
\begin{equation}
(f,g)=\int_\Omega f(x)g(x)\,dx.
\end{equation}
We denote the Sobolev space of $\H^1$ functions with vanishing trace on the boundary by
\begin{equation}
\H_0^1(\Omega)=\{v\in \H^1(\Omega): v|_{\partial\Omega}=0\}.
\end{equation}
Its dual space is denoted by $\H^{-1}(\Omega)$.
In our later assumption, we introduce Sobolev $\H^4$, i.e.:
\begin{equation} \label{H4}
\H^4(\Omega) = \{\bm u\in L^2(\Omega): D^\alpha\bm u\in L^2(\Omega) \text{ for every multi-index } \alpha \text{ with }|\alpha|\le 4\}. 
\end{equation}
For the pressure variable, we use the zero-mean space
\begin{equation}
L_0^2(\Omega)=\left\{q\in L^2(\Omega): \int_\Omega q(x)\,dx=0\right\}.
\end{equation}
A function $f$ belongs to $L^2(0,T;X)$ if
\begin{equation}
\int_0^T \norm{f(t)}_X^2\,dt < \infty,
\end{equation}
In other words, $L^2(0,T;X)$ denotes square-integrable functions of time taking values in the spatial function space $X$.
Furthermore, $f\in H^1(0,T;X)$ if both $f$ and its time derivative $\partial_t f$ belong to $L^2(0,T;X)$. Their norms are
\begin{equation}
\|f\|_{L^2(0,T;X)}^2 := \int_0^T \|f(t)\|_X^2\,dt,
\end{equation}
and
\begin{equation}
\|f\|_{H^1(0,T;X)}^2
:=
\int_0^T \|f(t)\|_X^2\,dt
+
\int_0^T \|\partial_t f(t)\|_X^2\,dt.
\end{equation}

\subsection{Convergence Results for Finite Difference and Finite Element Methods}

For classical time-dependent incompressible flow discretizations, including MAC schemes,
projection methods, and finite element methods, we refer to
\cite{HarlowWelch1965,Chorin1968,heywood1982finite,heywood1990finite,
temam2024navier,GiraultRaviart1986,GuermondMinevShen2006}.
In this section, we provide previous results of numerical analysis for unsteady Stokes, in both space and time discretization. Both spacial and time discretization for incompressible Stokes and Navier--Stokes have been well studied. We recommend authors to read \cite{heywood1982finite, heywood1990finite, hou2001error, chen2016finite, guermond2006overview, temam2024navier, han1998new} for further details. We first introduce a result from \cite[Theorem 3.1]{hou2001error}, which only assume low regularity on solution and external force $\bm f$. 
\begin{theorem}[Convergence rate under low regularity]
Assume that the solution $(\bm u, p)\in U\times P$ with $U$ and $P$ defined as
\begin{equation} U = L^2(0,T; \H^3(\Omega))\cap \H^1(0,T; \H^1(\Omega)), \qquad P =  L^2(0,T; \H^2(\Omega)), \end{equation}
and let $(\bm u^h, p^h)\in\H^1(0,T; \X^h)\times L^2(0,T; \M^h)$, where $\X^h\subset \H_0^1$ and $\M^h\subset L_0^2(\Omega)$ are the finite element spaces satisfying the inf-sup condition, then 
\begin{equation}
    \norm{\bm u(t) - \bm u^h(t)} + \norm{\bm u - \bm u^h}_{L^2(0,T, \H_0^1(\Omega))} \le C h^2\norm{\bm u}_{L^2(0,T; \H^{3}(\Omega))}.
\end{equation}
\end{theorem}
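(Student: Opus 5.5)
The proof follows the classical energy-method analysis of semidiscrete mixed finite element approximations of the Stokes problem, with the Stokes (Ritz) projection as the intermediate object. Let $(\bm R_h(t),Q_h(t))\in\X^h\times\M^h$ denote the discrete Stokes projection of $(\bm u(t),p(t))$. It is defined by
\begin{equation}
(\nabla(\bm u-\bm R_h),\nabla \bm v_h)-(\nabla\cdot \bm v_h,p-Q_h)=0,\qquad (\nabla\cdot(\bm u-\bm R_h),q_h)=0,
\end{equation}
for all $(\bm v_h,q_h)\in\X^h\times\M^h$. The inf-sup condition makes this projection well defined and stable. We split the error as
\begin{equation}
\bm u-\bm u^h=(\bm u-\bm R_h)+(\bm R_h-\bm u^h)=:\bm\rho+\bm\theta_h .
\end{equation}

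\textbf{Step 1: the Stokes projection error.} Standard approximation theory for inf-sup stable pairs (for example Taylor--Hood), combined with an Aubin--Nitsche duality argument, gives
\begin{equation}
\|\bm\rho\|_1\le Ch^2(\|\bm u\|_3+\|p\|_2),\qquad \|\bm\rho\|\le Ch^3(\|\bm u\|_3+\|p\|_2).
\end{equation}
The duality step needs $\H^2$-regularity of the stationary Stokes problem on $\Omega$. Since the projection commutes with $\partial_t$, the same argument yields $\|\bm\rho_t\|\le Ch\|\bm u_t\|_1$. This is only first order, but it is the only bound available from $\bm u\in \H^1(0,T;\H^1)$.

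\textbf{Step 2: the discrete error equation.} Subtract the semidiscrete scheme from the weak form and use the defining relations of the projection. This gives, for all $\bm v_h$ that are discretely divergence-free,
\begin{equation}
(\bm\theta_{h,t},\bm v_h)+(\nabla\bm\theta_h,\nabla\bm v_h)=-(\bm\rho_t,\bm v_h).
\end{equation}
The pressure terms drop out because $\bm\theta_h$ is itself discretely divergence-free. Choose $\bm v_h=\bm\theta_h$ and the initial condition $\bm u^h(0)=\bm R_h(0)$, so that $\bm\theta_h(0)=0$. Applying Young's and Poincaré's inequalities and then integrating in time gives
\begin{equation}
\|\bm\theta_h(t)\|^2+\int_0^t\|\nabla\bm\theta_h\|^2\le C\int_0^t\|\bm\rho_t\|_{-1}^2 .
\end{equation}

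\textbf{Step 3 (the expected obstacle): getting $h^2$ from $\bm\rho_t$.} Step 1 only gives $\|\bm\rho_t\|=O(h)$, so the time-derivative term must instead be measured in the weaker $\H^{-1}$ norm. A duality argument applied to the time derivative gives
\begin{equation}
\|\bm\rho_t\|_{-1}\le Ch\|\bm\rho_t\|\le Ch^2\|\bm u_t\|_1 .
\end{equation}
If this bound fails to close, I would instead fall back on a Gronwall argument with a weight in $t$, following Heywood--Rannacher. Either route leaves a constant that depends on $\|\bm u\|_{\H^1(0,T;\H^1)}$ and on $\|p\|_{L^2(0,T;\H^2)}$. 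The statement suppresses these dependencies, so they would be absorbed into the norm of the solution pair $(\bm u,p)$.

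\textbf{Step 4: conclusion.} Combine Steps 1–3 with the triangle inequality. Integrating $\|\bm\rho\|_1^2$ over $(0,T)$ gives the $L^2(0,T;\H_0^1)$ bound, and the pointwise bound on $\|\bm\rho(t)\|$ follows from Step 1 together with Sobolev embedding in time. The result is the asserted $O(h^2)$ estimate.
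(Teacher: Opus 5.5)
\textbf{Verdict: genuine gap.} The paper gives no proof of this statement; it only quotes Theorem~3.1 of Hou's 2001 paper. So your proposal has to stand on its own. You are right that the displayed right-hand side cannot be meant literally. For an inf-sup stable pair the velocity error depends on $\inf_{q_h}\|p-q_h\|$, and the initial error and $\|\bm u_t\|_{L^2(0,T;H^1)}$ must enter as well. Your architecture (comparison function, energy estimate, weak norm on the time-derivative term) is standard. The central step, however, is not closed as written.

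\textbf{First problem: your projection depends on $p$.} Your $(\bm R_h,Q_h)$ is the discrete Stokes projection of the pair $(\bm u,p)$, so its velocity part depends on $p$. For discretely divergence-free $\bm v_h$ one has $(\nabla(\bm u-\bm R_h),\nabla\bm v_h)=(\nabla\cdot\bm v_h,p)$, which is generally nonzero. Consequences:
\begin{itemize}
\item $\partial_t\bm R_h$ is the projection of $(\bm u_t,p_t)$, and the choice $\bm u^h(0)=\bm R_h(0)$ needs $p(0)$. The hypothesis $p\in L^2(0,T;H^2)$ controls neither, so $\|\bm\rho_t\|\le Ch\|\bm u_t\|_1$ does not follow.
\item Step~4 uses pointwise values $\|\bm u(t)\|_3$ and $\|p(t)\|_2$, which are not available either.
\end{itemize}
This part is repairable. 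Project $(\bm u,0)$ instead, i.e.\ use the Ritz projection onto
\[
V_h=\{\bm v_h\in\X^h:(\nabla\cdot\bm v_h,q_h)=0\ \forall q_h\in\M^h\}.
\]
The pressure then enters only through $(\nabla\cdot\bm\theta_h,p-I_hp)\le Ch^2\|p\|_2\|\nabla\bm\theta_h\|$, with $I_hp\in\M^h$ an interpolant.

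\textbf{Second problem: the key inequality of Step~3.} The bound $\|\bm\rho_t\|_{-1}\le Ch\|\bm\rho_t\|$ is a property of $L^2$-projection errors, via $(e,\phi)=(e,\phi-P_h\phi)$. It is not a property of Ritz/Stokes projection errors. For those, duality gives $\|\bm\rho_t\|_{-1}\le Ch^2\|\bm\rho_t\|_1$. That requires $H^3\times H^2$ regularity of the stationary Stokes problem with $H^1_0$ data, which is strictly stronger than the $H^2$ regularity you invoke in Step~1 and is not among the hypotheses. The Heywood--Rannacher fallback does not rescue this. It produces $\min\{1,t\}$-weighted bounds designed for nonsmooth data, not the uniform-in-time estimate claimed.

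\textbf{A fix needing only $H^2$ regularity and a quasi-uniform mesh.} Compare with the $L^2$-orthogonal projection $\mathbb P_h\bm u$ onto $V_h$. Then $(\partial_t(\bm u-\mathbb P_h\bm u),\bm v_h)=0$ for all $\bm v_h\in V_h$, so the troublesome term vanishes identically. With $\bm\xi_h=\mathbb P_h\bm u-\bm u^h$ and $\bm\xi_h(0)=0$,
\[
\|\bm\xi_h(t)\|^2+\int_0^t\|\nabla\bm\xi_h\|^2\le C\int_0^t\bigl(\|\nabla(\bm u-\mathbb P_h\bm u)\|^2+\|p-I_hp\|^2\bigr).
\]
\begin{itemize}
\item The bound $\|\nabla(\bm u-\mathbb P_h\bm u)\|\le Ch^2\|\bm u\|_3$ follows from the Ritz estimate plus an inverse inequality applied to $\mathbb P_h(\bm R_h\bm u-\bm u)\in V_h$.
\item The pointwise term uses $\|\bm u(t)-\mathbb P_h\bm u(t)\|\le\|\bm u(t)-\bm R_h\bm u(t)\|\le Ch^2\|\bm u(t)\|_2$.
\item This needs $\bm u\in C([0,T];H^2)$, which follows from $\bm u\in L^2(0,T;H^3)\cap H^1(0,T;H^1)$ by interpolation. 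That embedding is the only place the $\bm u_t$ hypothesis is used.
\end{itemize}
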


A popular numerical approach for Stokes equations is Marker-and-Cell (MAC) method, a finite different method. We now provide a MAC staggered grid convergence result from \cite{chen2016finite}. Since the role of the MAC scheme in our later discussion is to provide a compatible spatial
discretization, we quote a representative steady Stokes estimate for the spatial error. 
\begin{theorem}[MAC spatial convergence]
Assume that the exact solution $(\bm u,p)$ of the steady Stokes problem satisfies
\begin{equation}
\bm u \in H_0^1(\Omega)^2\cap W^{3,\infty}(\Omega)^2,
\qquad
\Delta \bm u \in H^2(\Omega),
\qquad
p \in L_0^2(\Omega)\cap W^{3,\infty}(\Omega).
\end{equation}
Here $W^{3,\infty}(\Omega)\coloneqq \{\bm u: \Omega\to\R \ | \ D^\alpha \bm u\in L^\infty(\Omega) \text{ for all }|\alpha|\le3\}.$
Let $(\bm u_h,p_h)$ be the MAC approximation on a uniform staggered grid, and let
$(\bm u_I,p_I)$ be the corresponding interpolant of $(\bm u,p)$ onto the grid unknowns.

Then, for the MAC scheme with quadratic boundary extrapolation,
\begin{equation}
|\bm u_I-\bm u_h|_{1}+\|p_I-p_h\|_{L^2(\Omega)} \le C h^2.
\end{equation}
Here the constant $C>0$ is independent of $h$ and depends only on the solution norms $\|\bm u\|_{W^{3,\infty}(\Omega)}$, $\|\Delta \bm u\|_{H^2(\Omega)}$ and $\|p\|_{W^{3,\infty}(\Omega)}.$
\end{theorem}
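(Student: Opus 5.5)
The estimate is the classical second-order consistency-plus-stability argument for the MAC scheme, so I will organize the proof around the error equations satisfied by the interpolation errors. Set $\bm e_h=\bm u_I-\bm u_h$ and $\epsilon_h=p_I-p_h$. Applying the discrete MAC operators (five-point Laplacian $\Delta_h$ on each staggered velocity component, forward/backward difference gradient $\nabla_h$ and divergence $\nabla_h\cdot$) to the interpolant $(\bm u_I,p_I)$ and subtracting the discrete equations yields
\begin{equation}
-\Delta_h \bm e_h+\nabla_h\epsilon_h=\bm\tau_u,\qquad \nabla_h\cdot \bm e_h=\tau_p,
\end{equation}
with truncation errors $\bm\tau_u,\tau_p$. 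The plan is to (i) bound these truncation errors, (ii) prove a discrete energy/inf-sup stability estimate for the MAC saddle-point system in the discrete $H^1$ seminorm and $L^2$ pressure norm, and (iii) combine them.

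For step (i), Taylor expansion with $W^{3,\infty}$ regularity shows that in the interior $\bm\tau_u=O(h^2)$ and $\tau_p=0$, since the MAC divergence of the face interpolant of a divergence-free field vanishes exactly by the divergence theorem on each cell. The pressure gradient at faces is a centered difference, so it is $O(h^2)$ using $p\in W^{3,\infty}$. Near the boundary, the ghost values are defined by quadratic extrapolation of the Dirichlet data. There the truncation error of $\Delta_h$ is only $O(h)$ pointwise, in a layer of width $O(h)$. The remedy is the standard trick of representing the boundary-row truncation error as a discrete divergence (difference) of an $O(h^2)$ quantity, or equivalently estimating it in a discrete $H^{-1}$ norm. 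The assumption $\Delta\bm u\in H^2$ enters precisely when bounding the interior remainder terms involving fourth derivatives in $L^2$ rather than $L^\infty$.

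For step (ii), test the momentum error equation with $\bm e_h$ using summation by parts, and use the adjointness $(\nabla_h\epsilon_h,\bm e_h)=-(\epsilon_h,\nabla_h\cdot\bm e_h)=-(\epsilon_h,\tau_p)$. This gives $|\bm e_h|_1^2\le \|\bm\tau_u\|_{-1,h}|\bm e_h|_1+\|\epsilon_h\|\,\|\tau_p\|$. For the pressure, invoke the discrete inf-sup (LBB) condition of the MAC scheme, namely $\|q\|\le \beta^{-1}\sup_{\bm v}(q,\nabla_h\cdot\bm v)/|\bm v|_1$ with $\beta$ independent of $h$. Applied to the momentum equation, it yields $\|\epsilon_h\|\lesssim |\bm e_h|_1+\|\bm\tau_u\|_{-1,h}$. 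Combining these with Young's inequality then gives $|\bm e_h|_1+\|\epsilon_h\|\le C h^2$.

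The main obstacle is the boundary layer in step (i)/(ii). A naive pointwise bound gives only $O(h)$ there, hence only $O(h^{3/2})$ overall. I would first try to show that the quadratic extrapolation makes the boundary truncation error $O(h)$ but with a structure that has discrete $H^{-1}$ norm $O(h^2)$: pairing with $\bm e_h$, which vanishes to first order at the wall, gains a factor $h$ via a discrete Poincaré inequality in the boundary strip. If that fails, I would construct a discrete corrector supported near the boundary (a supercloseness-type argument, as in \cite{chen2016finite}) that absorbs the $O(h)$ residual.
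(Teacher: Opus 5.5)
The paper does not prove this estimate. It quotes it from \cite{chen2016finite}, so your proposal has to stand on its own. Your consistency--stability architecture is sound: the error equations, the adjointness $(\nabla_h q,\bm v)=-(q,\nabla_h\cdot\bm v)$, an $h$-uniform MAC inf-sup constant, and consistency measured in a discrete $H^{-1}$ norm. However, step (i) contains a false claim exactly where the hypotheses matter.

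With only $\bm u\in W^{3,\infty}$, Taylor's theorem gives $|\delta_x^2u-u_{xx}|\le \tfrac{h}{3}\|u_{xxx}\|_{L^\infty}$, which is $O(h)$ pointwise and is sharp in general. So the \emph{interior} truncation error is no better pointwise than the boundary row. Since it lives on the whole grid, a plain $\ell^2$ bound gives only $O(h)$.

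Getting $O(h^2)$ requires fourth-order information, and the only such hypothesis is $\Delta\bm u\in H^2$. The five-point truncation $\tfrac{h^2}{12}(\partial_x^4+\partial_y^4)u$ involves the pure fourth derivatives individually, and you give no argument converting one into the other. Your remark that $\Delta\bm u\in H^2$ controls ``fourth derivatives in $L^2$'' is therefore the unproved step. One self-contained way to close it uses $\delta_x^2u=B_xu_{xx}$, where $B_x$ is the hat average of half-width $h$ in $x$, and similarly $B_y$. This yields the exact identity
\[
\Delta_hu-\Delta u=(B_xB_y-I)\Delta u-\delta_x^-\bigl[(B_y-I)\delta_x^+u\bigr]-\delta_y^-\bigl[(B_x-I)\delta_y^+u\bigr].
\]
The last two terms are discrete divergences of fluxes bounded by $\tfrac{h^2}{12}\|u\|_{W^{3,\infty}}$. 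They are therefore $O(h^2)$ in the discrete $H^{-1}$ norm after summation by parts. The first term is $O(h^2)\|\Delta u\|_{H^2}$ in discrete $\ell^2$ by Bramble--Hilbert: point evaluation is bounded on $H^2$ in two dimensions, and $B_xB_y$ reproduces linear functions. Alternatively, on a rectangle you could try to obtain $u_i\in H^4$ from elliptic regularity, using that $\Delta u_i$ vanishes at the corners when $u_i\in C^2(\bar\Omega)$ has zero trace; that is an extra argument you would have to supply. Either way, the negative-norm device is needed in the interior, and this is where $\Delta\bm u\in H^2$ actually enters.

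Conversely, the boundary layer you single out as the main obstacle is the easy part, and your first idea already settles it. Quadratic extrapolation gives an $O(h)$ truncation error on a single row, whose $\ell^2$ norm is $O(h^{3/2})$. Meanwhile $\|\bm e_h\|_{\ell^2(\mathrm{strip})}\le Ch|\bm e_h|_1$, so the boundary contribution is $O(h^{5/2})|\bm e_h|_1$ and no corrector is needed. With linear extrapolation the row error would be $O(1)$, giving only $O(h^{3/2})$; this is why the quadratic choice is imposed. Rows whose stencil or hat support reaches the wall can be handled by this same strip bound.

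Two smaller points:
\begin{itemize}
\item Quadratic extrapolation makes the first-row stencil in the wall-normal direction equal to $(4e_1-\tfrac43e_2)/h^2$, which is not symmetric. Your energy step therefore needs a short coercivity and boundedness check of the modified form rather than plain summation by parts.
\item $\tau_p=0$ holds only for the face-flux interpolant. For nodal values $\tau_p=O(h^2)$, which your energy inequality already accommodates.
\end{itemize}
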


We provide a numerical analysis establishing second-order convergence in time for the Crank–Nicolson method, following \cite[Theorem 4.1]{heywood1990finite}. 
shows the second-order temporal accuracy of Crank--Nicolson.
\begin{theorem}[Second-order convergence in time for Crank--Nicolson]
Let $(\bm u_h(t),p_h(t))$ be the semidiscrete solution of the spatially discretized unsteady
Stokes system on $[0,T]$, and let $\{(\bm U_h^n,P_h^n)\}_{n=0}^N$ be the Crank--Nicolson
approximation with time step $k=T/N$. We denote
\begin{equation} \V = \{\bm\varphi\in \H_0^1(\Omega): \nabla\cdot \bm\varphi=0\}. \end{equation}
Assume the initial velocity and external forcing satisfy 
\begin{equation}\bm u_0 \in \H^2(\Omega)^d \cap \V, \end{equation}
\begin{equation} \bm f,\ \partial_t \bm f,\ \partial_{tt}\bm f
\in L^\infty(0,T;H^{-1}(\Omega)^d). \end{equation}
Assume that the corresponding strong solution exists on $[0,T]$ and
\begin{equation}
\sup_{t\in[0,T]}\|\nabla \bm u(t)\|_{L^2(\Omega)} \le M .
\end{equation}

Then, for sufficiently small $k$, the Crank--Nicolson velocity approximation is second order
in time in the sense that
\begin{equation}
\|\bm U_h^n-\bm u_h(t_n)\|_{L^2(\Omega)}
\le C_T\, \tau_n^{-1} k^2,
\qquad
\tau_n:=\min\{1,t_n\},
\end{equation}
for all $0<t_n\le T$. Furthermore, the associated pressure approximation satisfies
\begin{equation}
\|P_h^n-\overline p_h(t_n)\|_{L^2(\Omega)/N_h}
\le C_T\, \tau_n^{-3/2} k,
\end{equation}
where
\begin{equation}
\overline p_h(t_n):=\frac12\bigl(p_h(t_n)+p_h(t_{n-1})\bigr).
\end{equation}
\end{theorem}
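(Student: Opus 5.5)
We follow the strategy of \cite{heywood1990finite}. First we reduce to an evolution on the discretely divergence-free subspace, then we carry out time-weighted energy and duality estimates for the Crank--Nicolson error, and finally we recover the pressure from the momentum equation through the inf-sup condition.

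\textbf{Reduction.} Testing the semidiscrete and Crank--Nicolson equations with discretely divergence-free functions eliminates the pressure, exactly as in \cref{prop:schur_equiv}. Let $A_h=-\Pi_h\Delta_h\Pi_h$ denote the discrete Stokes operator, and let $f_h=\Pi_h\bm f$. The velocity then satisfies
\[
\bm u_h'+A_h\bm u_h=f_h,
\qquad
\frac{\bm U^n-\bm U^{n-1}}{k}+A_h\frac{\bm U^n+\bm U^{n-1}}{2}=\frac{f_h(t_n)+f_h(t_{n-1})}{2}.
\]
Since $\bm u_0\in \H^2\cap\V$, we may take the discrete initial datum with $\|A_h\bm u_h(0)\|\le C$ uniformly in $h$. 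The error $\bm e^n=\bm U^n-\bm u_h(t_n)$ satisfies the same scheme with a truncation residual $\rho^n$. By Taylor expansion about $t_{n-1/2}$,
\[
\|\rho^n\|_{*}\lesssim k\int_{t_{n-1}}^{t_n}\bigl(\|\bm u_h'''\|_{*}+\|A_h\bm u_h''\|_{*}\bigr)\,ds,
\]
in a suitable negative norm $\|\cdot\|_*$ ($H^{-1}$-type, i.e.\ $\|A_h^{-1/2}\cdot\|$).

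\textbf{Key steps.}
\begin{enumerate}[label=(\roman*)]
\item We establish a priori smoothing bounds for the semidiscrete solution, uniform in $h$. Examples are $\|\bm u_h'(t)\|\le C$ and $\tau(t)\|A_h^{1/2}\bm u_h''(t)\|\le C$, together with weighted integrals such as $\int_0^T \tau^2\|\bm u_h'''\|_{*}^2\,dt\le C$. These follow from differentiating the equation in time, testing with $\tau^j$-weighted multiples of $A_h^m\bm u_h^{(j)}$, and using the assumed bounds on $\bm f,\partial_t\bm f,\partial_{tt}\bm f$ in $H^{-1}$ and $\sup_t\|\nabla\bm u\|\le M$.
\item We prove a stability estimate for Crank--Nicolson in the weak norm. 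Testing the error equation with $A_h^{-1}(\bm e^n+\bm e^{n-1})$ gives telescoping control of $\|A_h^{-1/2}\bm e^n\|$ by $\sum\|\rho^n\|_{*}$. Then a backward discrete dual problem (a discrete Duhamel argument) lifts this to the $L^2$ error. Here the time weights $\tau_n$ are inserted as in Heywood--Rannacher's treatment of nonsmooth data.
\item We combine (i) and (ii) to obtain $\|\bm e^n\|\le C_T\tau_n^{-1}k^2$.
\item For the pressure, we subtract the full momentum equations. This expresses $\nabla_h(P_h^n-\overline p_h(t_n))$ through the discrete time derivative of $\bm e^n$, $A_h$ applied to averaged errors, and the truncation term. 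The inf-sup condition then bounds the pressure error by the $H^{-1}$ norm of these quantities. An inverse-type loss of $\tau_n^{-1/2}$ together with the first-order averaging error $\overline p_h(t_n)-p_h(t_{n-1/2})=O(k)$ yields $C_T\tau_n^{-3/2}k$.
\end{enumerate}

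\textbf{Main obstacle.} The critical step is (ii). Crank--Nicolson is only $A$-stable, not strongly damping: its amplification factor $r(z)=(1-z/2)/(1+z/2)\to-1$ as $z\to\infty$. High-frequency modes of $A_h$ therefore do not smooth. The $t^{-1}$ blow-up of $\bm u_h''$ at $t=0$, caused by the lack of higher compatibility of $\bm u_0$, cannot simply be absorbed, and $k$ is not allowed to depend on $h$.

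We would first try a spectral argument in the eigenbasis of $A_h$. The bound $|r(z)^n-e^{-nz}|\le Cz^3/n^2$ for $z\le1$ handles low modes. For high modes we would use $\|A_h\bm u_h(0)\|\le C$ together with the $H^{-1}$-weighted truncation bounds. If this fails to give the sharp $\tau_n^{-1}$ weight, we would fall back on the Heywood--Rannacher weighted energy/duality argument. In that argument the dual problem absorbs exactly one power of $A_h$, and the remaining singularity contributes only $\tau_n^{-1}$. This uses ``$k$ sufficiently small'' only to control lower-order terms.
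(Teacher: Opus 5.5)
\textbf{Verdict: genuine gap at step (ii).} The paper gives no proof of this statement; it quotes it from \cite{heywood1990finite}. Your proposal defers to that reference exactly where the work lies, so the question is whether step (ii) stands on its own. It does not, because both mechanisms you describe rely on smoothing that Crank--Nicolson lacks.

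\emph{The weak-norm route.} The unweighted weak-norm energy estimate gives $\|A_h^{-1/2}\bm e^N\|\le k\sum_n\|A_h^{-1/2}\rho^n\|$. With $\|A_h^{-1/2}\bm u_h'''(t)\|\sim t^{-3/2}$ and $\|A_h^{-1/2}\rho^1\|\sim k^{1/2}$, this is only $O(k^{3/2})$. No unweighted bound can do better, since $\|A_h^{-1/2}\bm e^1\|$ is itself of size $k^{3/2}$ for suitable data with $\|A_h\bm u_{0h}\|\le 1$.

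\emph{The duality lift.} Lifting to $L^2$ through a backward discrete dual problem would need bounds such as $\|A_h\,r(kA_h)^m\|\lesssim (mk)^{-1}$. These fail uniformly in $h$ because $|r(z)|\to1$ as $z\to\infty$. The claim that ``the dual problem absorbs exactly one power of $A_h$'' is precisely the property CN does not have.

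\emph{The spectral fallback.} This also stops short. Using only $|r|\le1$ and $\|A_h\bm u_{0h}\|\le C$, the modes with $k\lambda\ge1$ have total amplitude at most $k\|A_h\bm u_{0h}\|$, so they contribute $O(k)$.

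\emph{The missing fact.} What you need is the slow high-frequency damping $|r(z)|\le e^{-1/z}$ for $z\ge1$. Over $n$ steps it gives $\lambda^{-1}|r(k\lambda)|^n\le k\sup_{0<w\le1}we^{-nw}\le k/(en)$. For $\bm f=0$, in the self-adjoint setting of this paper, this closes the argument. Write $\bm e^n=k\,g_n(kA_h)A_h\bm u_{0h}$ with $g_n(z)=z^{-1}\bigl(r(z)^n-e^{-nz}\bigr)$. Then
\[
\sup_{z>0}|g_n(z)|\le\frac{C}{n},
\qquad\text{hence}\qquad
\|\bm e^n\|\le C\,k^2t_n^{-1}\|A_h\bm u_{0h}\|,
\]
uniformly in $h$ and with no step-size restriction. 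On $[1,\infty)$ the bound on $g_n$ follows from the damping estimate. On $(0,1]$, write $r(z)=e^{-z-\delta(z)}$ with $0\le\delta(z)\le Cz^3$, so that $|r(z)^n-e^{-nz}|\le Cnz^3e^{-nz}$. Your bound $Cz^3/n^2$ is false when $nz\approx1$. Forcing then enters through a discrete Duhamel formula under suitable regularity of $\bm f$. In Heywood--Rannacher's energy framework, the same mechanism appears as weighted estimates for the difference quotients and midpoint averages of the CN iterates. These come from the dissipation the scheme accumulates over many steps, not from smoothing of a dual problem.

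\textbf{Incorrect auxiliary claims.}
\begin{itemize}
\item In (i), $\tau\|A_h^{1/2}\bm u_h''\|\le C$ fails. Since $\|A_h^{1/2}\bm u_h''\|=\|A_h^{3/2}e^{-tA_h}A_h\bm u_{0h}\|$ is of size $t^{-3/2}$, the correct weight is $\tau^{3/2}$.
\item Also in (i), $\|\bm u_h'\|\le C$ needs $\bm f(0)\in L^2$, because $\bm u_h'(0)=\Pi_h\bm f(0)-A_h\bm u_{0h}$. $H^{-1}$ bounds on $\bm f$ only give $\|\bm u_h'(t)\|\lesssim\tau^{-1/2}$.
\item In (iv), $\overline p_h(t_n)-p_h(t_{n-1/2})$ is $O\bigl(k^2\max_{[t_{n-1},t_n]}\|p_h''\|\bigr)$, so it is second order, not first.
\item There is no $h$-independent ``inverse-type'' passage from the $L^2$ error to $\|\nabla\bm e^n\|$. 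For data concentrated near $\lambda\approx t_n/k^2$, $\|\nabla\bm e^n\|$ is of size $k\,t_n^{-1/2}$ while $\|\bm e^n\|$ is of size $k^2t_n^{-1}$.
\end{itemize}
The pressure bound should instead be read off the error identity, which involves only three quantities: $\partial_k\bm e^n$, the averaged error $\tfrac12(\bm e^n+\bm e^{n-1})$ (whose high modes carry the extra factor $(1+k\lambda/2)^{-1}$), and $\rho^n$. Estimate each directly and then apply the inf-sup condition.
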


Both discretization frameworks above, mixed FEM and MAC staggered grid, lead to a finite-dimensional constrained DAE system. More precisely, if $\bm x(t)$ denotes the vector of discrete velocity unknowns and $\bm p(t)$ the vector of discrete pressure unknowns, then the semidiscrete Stokes equations take the form
\begin{equation} \label{eq: numerical Stokes}
    M_v \bm x'(t) = L_h \bm x(t) + D_h^\dag \bm p(t),
    \qquad
    D_h \bm x(t)=0,
\end{equation}
where $M_v$ is the velocity mass matrix, $L_h$ is the discrete diffusion operator, and $D_h$ is the discrete divergence operator. Denote the spacial mesh size by $h$. In the case of mixed finite element discretizations, we employ mass lumping to yield a diagonal velocity mass matrix $M_{v}^{\operatorname{diag}}$, which is subsequently rescaled to the identity. For the MAC staggered-grid method, the velocity mass matrix is inherently the identity. For the autonomous Hamiltonian simulation discussion below, we focus on the homogeneous case
$\bm f_h=0$; time-dependent forcing can be treated by standard augmentation or variation-of-
constants formulations, but is not the focus here.

The natural follow-up question is how do we simulate the semidiscrete system via Hamiltonian simulation.
For the Stokes problem, after dilation, we have 
\begin{equation}\label{th4stokes}
\tH_h \coloneqq iF\otimes \Delta_h \qquad
P_h := \IAd\otimes \Pi_h.
\end{equation}
Then the reduced dilated physical
evolution is
\begin{equation}
i\frac{d}{dt}\ket{\Psi(t)}=\pH \ket{\Psi(t)},
\qquad
\pH:=P_h \tH_h P_h.
\end{equation}

A direct simulation of the reduced dynamics is still quite expensive. Indeed, for the Stokes operator one
typically has
\begin{equation}
\|\tH\|=\Theta(h^{-2}),
\end{equation}
and the cost of implementing the projector $P_h$ by QSVT introduces an additional dependence on $h$. Following from \cref{thm:end_to_end_complexity}, we have the following baseline estimate with a direct implementation of the optimal Hamiltonian simulation. 
\begin{corollary}[Direct projected simulation]\label{cor-direct-sim}
Recall that $\|\tH_h\|=\Theta(h^{-2})$. Assume furthermore that the
block-encoding cost of $P_h$ is $\widetilde O(h^{-1}).$ Then direct QSVT simulation of the dynamics $e^{-itP_h \tH_h P_h}$ has query complexity $\widetilde O(th^{-3})$.
\end{corollary}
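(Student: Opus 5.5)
This follows directly from \cref{thm:end_to_end_complexity}, applied with $\tH=\tH_h$ and $P=P_h$. The work is to identify the parameters $\alpha_{\tH}$ and $\mathfrak p$ in terms of $h$ and multiply them.

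\textbf{Hamiltonian normalization.} Since $\tH_h=iF\otimes\Delta_h$ and the ancilla operator $F$ is fixed (its dimension $M$ is independent of $h$), we have $\|\tH_h\|\le\|F\|\,\|\Delta_h\|=\Theta(h^{-2})$. I will assume the standard sparse-access model for $\Delta_h$. In that model a block-encoding of $\Delta_h$ with normalization $\alpha_{\Delta}=O(h^{-2})$ is available: $\Delta_h$ has $O(1)$ nonzeros per row, each of size $O(h^{-2})$. Tensoring with a block-encoding of $F$, as described in the implementation paragraph, gives a block-encoding of $\tH_h$ with $\alpha_{\tH}=O(h^{-2})$ and $O(1)$ queries per use. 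The simulation factor in \cref{thm:end_to_end_complexity} is then
\[
\alpha_{\tH}t+\frac{\log(1/\varepsilon)}{\log\bigl(e+\log(1/\varepsilon)/(\alpha_{\tH}t)\bigr)}=\widetilde O(th^{-2}).
\]
The additive logarithmic term is absorbed because $\alpha_{\tH}t\gtrsim 1$ in the regime of interest.

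\textbf{Projector cost.} The statement assumes that $P_h=\IAd\otimes\Pi_h$ has block-encoding cost $\widetilde O(h^{-1})$. This is exactly the quantity $\mathfrak p=O(\kappa(D)\log(\alpha_{\tH}t/\varepsilon))$ in \cref{thm:end_to_end_complexity}, taken with $D=\IAd\otimes D_h$. The assumption is consistent with that theorem for two reasons. First, $\kappa(\IAd\otimes D_h)=\kappa(D_h)$, since tensoring with the identity does not change the singular values. Second, for inf-sup stable discretizations (MAC and mixed FEM), the nonzero singular values of $D_h$ lie in $[c,\,Ch^{-1}]$, so $\kappa(D_h)=O(h^{-1})$. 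The logarithmic factor $\log(\alpha_{\tH}t/\varepsilon)=O(\log(t/(h\varepsilon)))$ is hidden in the tilde.

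\textbf{Combining.} The query complexity from \cref{thm:end_to_end_complexity} is the simulation factor times $(1+\mathfrak p)$:
\[
\widetilde O(th^{-2})\cdot\widetilde O(h^{-1})=\widetilde O(th^{-3}).
\]
For this, the error budgets are set as in the theorem: $\eta_{\tH}=O(\varepsilon/t)$, $\eta_P=O(\varepsilon h^2/t)$, and $\eta_D=O(\eta_P/\mathfrak p)$. These choices only add logarithmic factors.

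The main point needing care is the normalization of the sparse block-encoding of $\Delta_h$. I need $\alpha_{\tH}$ to be $O(h^{-2})$, and not some larger quantity such as a Frobenius-type norm. The sparse-access block-encoding with row sparsity $s=O(1)$ and maximum entry $O(h^{-2})$ gives $\alpha=s\max|a_{ij}|=O(h^{-2})$, which settles this.
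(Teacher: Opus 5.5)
Your proposal is correct and follows the paper's own route: the paper gives no separate proof, just invoking \cref{thm:end_to_end_complexity} with $\alpha_{\tH}=\Theta(h^{-2})$ from the diffusive scale and $\mathfrak p=\widetilde O(h^{-1})$ from the incompressibility projector, and multiplying the two factors to get $\widetilde O(th^{-3})$. Your extra bookkeeping is compatible with the paper but not spelled out there: the sparse-access normalization $\alpha_{\tH}=O(h^{-2})$, which, like the paper, tacitly treats the ancilla size $M$ as $h$-independent, and the inf-sup check that $\kappa(D_h)=O(h^{-1})$.
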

Thus, for $t=O(1)$, the cost, in terms of the queries to the block encodings of the operators in the $D_h$ and $L_h$,  scales like $O(h^{-3})$: one factor $h^{-2}$ comes from the diffusive scale and one factor $h^{-1}$ from enforcing incompressibility.

{

\subsection{Reduced Quantum Zeno simulation by Gaussian moment dilation}
\label{subsec:gaussian_zeno_stokes}

The rough complexity estimate in \cref{cor-direct-sim} provides a useful baseline for
simulating the projected Stokes dynamics by directly block-encoding the compressed
Hamiltonian $P_h\tH_hP_h$. However, the Stokes operator has additional structure that can be leveraged to obtain a better QZE form. Specifically,  after reduction to the discrete divergence-free subspace, the homogeneous Stokes
evolution is generated by a positive operator that admits a square factorization.
This allows us to replace the generic moment dilation of $\exp(tL)$ by a more
specialized Gaussian moment dilation of $\exp(-tA^\dagger A)$.

To elaborate on the construction, we recall that
\begin{equation}
    \Pi_h := I-D_h^\dagger(D_hD_h^\dagger)^{-1}D_h
\end{equation}
is the orthogonal projector onto the discrete divergence-free velocity subspace.
For the compatible MAC or mixed finite element discretizations considered here, the
negative discrete Laplacian admits the factorization
\begin{equation}
    -\Delta_h = G_h^\dagger G_h \succeq 0,
\end{equation}
where $G_h$ is the corresponding first-order difference or finite element gradient
operator. Therefore the reduced homogeneous Stokes equation can be written as
\begin{equation}
    \frac{d}{dt}\bm u_h(t)
    =
    \Pi_h\Delta_h\Pi_h\,\bm u_h(t)
    =
    -S_h\bm u_h(t),
\end{equation}
where
\begin{equation}
    S_h:=-\Pi_h\Delta_h\Pi_h
    =
    \Pi_hG_h^\dagger G_h\Pi_h .
\end{equation}
Since $\Pi_h=\Pi_h^\dagger=\Pi_h^2$ as an orthogonal projection, we have the square factorization
\begin{equation}
\label{eq:Stokes-square-factorization}
    S_h=(G_h\Pi_h)^\dagger(G_h\Pi_h).
\end{equation}
This factorization is the main structure to be exploited in this subsection. 

\medskip 
The time evolution generated by $S_h$ has the Gaussian form $\exp(-t x^2)$ and thus we seek a dilation customized for such Gaussian dynamics.  Here and below in this subsection, the symbol $F$ denotes a new self-adjoint Gaussian
ancilla operator.  It is not the skew-Hermitian moment-dilation operator used in
\cref{section: math_form}. 

We first recall the elementary Gaussian moment identity. Let $F$ be a self-adjoint
ancillary operator on an ancillary Hilbert space $\Anc$, which is different from that in \cref{th4stokes}, and let $\ket{r},\bra{l}\in\Anc$
be a lifting-recovery pair satisfying
\begin{equation}
\label{eq:heat-moment-conditions}
    \bra{l}F^{2m+1}\ket{r}=0,
    \qquad
    \bra{l}F^{2m}\ket{r}=\frac{(2m)!}{m!},
    \qquad m=0,1,2,\ldots .
\end{equation}
Then, for any Hermitian operator $B$,
\begin{align}
    (\bra{l}\otimes I)
    e^{-iF\otimes B}
    (\ket{r}\otimes I)
    &=
    \sum_{j=0}^\infty
    \frac{(-i)^j}{j!}\bra{l}F^j\ket{r}\,B^j                                      \notag\\
    &=
    \sum_{m=0}^\infty
    \frac{(-i)^{2m}}{(2m)!}
    \frac{(2m)!}{m!}B^{2m}                                                        \notag\\
    &=
    \sum_{m=0}^\infty
    \frac{(-1)^m}{m!}B^{2m}
    =
    e^{-B^2}.
\end{align}
After replacing $B$ by $\sqrt t\,B$, we obtain
\begin{equation}
\label{eq:abstract-heat-moment-dilation}
    e^{-tB^2}
    =
    (\bra{l}\otimes I)
    e^{-i\sqrt t\,F\otimes B}
    (\ket{r}\otimes I).
\end{equation}

A concrete realization is obtained from a Gaussian ancillary state. We will leave other possible choices to future studies. Take
$\Anc=L^2(\mathbb R)$ and let $F$ be the operator
\begin{equation}
    (F\phi)(q)\coloneqq q\phi(q).
\end{equation}
In the generalized position basis,
\begin{equation}
    F\ket{q}=q\ket{q},
    \qquad
    I_{\mathcal A}=\int_{\mathbb R}\ket{q}\bra{q}\,dq .
\end{equation}
Let $\ket{g}\in L^2(\mathbb R)$ be the normalized Gaussian state with wavefunction
\begin{equation}
    g(q):=\bra{q}\ket{g}
    =
    (4\pi)^{-1/4}e^{-q^2/8}.
\end{equation}
Then
\begin{equation}
    |g(q)|^2
    =
    \frac{1}{\sqrt{4\pi}}e^{-q^2/4},
\end{equation}
which is the density of a centered Gaussian random variable with variance $2$. We choose
\begin{equation}
    |r\rangle=|g\rangle,\qquad \langle l|=\langle g|.
\end{equation}
Equivalently,
\begin{equation}
    \ket{g}=\int_{\mathbb R}g(q)\ket{q}\,dq,
    \qquad
    \bra{g}=\int_{\mathbb R}\overline{g(q)}\bra{q}\,dq .
\end{equation}
Therefore
\begin{equation}
    \bra{g}F^m\ket{g}
    =
    \int_{\mathbb R}q^m |g(q)|^2\,dq .
\end{equation}
Since $|g(q)|^2$ is an even Gaussian density,
\begin{equation}
    \bra{g}F^{2m+1}\ket{g}=0,
    \qquad
    \bra{g}F^{2m}\ket{g}
    =
    \frac{(2m)!}{m!}.
\end{equation}
Equivalently, its characteristic function satisfies
\begin{equation}
    \bra{g}e^{-iuF}\ket{g}=e^{-u^2}.
\end{equation}
Thus
\begin{equation}
\label{eq:gaussian-moment-dilation}
    e^{-tB^2}
    =
    (\bra{g}\otimes I)
    e^{-i\sqrt t\,F\otimes B}
    (\ket{g}\otimes I).
\end{equation}
Equivalently, by the Fourier representation of the Gaussian,
\begin{equation}
\label{eq:operator-gaussian-integral}
    e^{-tB^2}
    =
    \frac{1}{\sqrt{4\pi t}}
    \int_{\mathbb R}
    e^{-k^2/(4t)}e^{-ikB}\,dk .
\end{equation}
Hence the Gaussian integral is not merely an LCU representation; it is a concrete
moment-matching dilation for the function $e^{-x^2}$.

We now apply this construction to the reduced Stokes operator. Define the first-order Hermitian
dilation
\begin{equation}
    \mathscr D_h :=
    \begin{pmatrix}
        0&G_h^\dagger\\
        G_h&0
    \end{pmatrix}
\end{equation}
on the direct sum of the velocity space and the corresponding gradient space. Define the lifted
incompressibility projector
\begin{equation}
    \mathcal P_h :=
    \begin{pmatrix}
        \Pi_h&0\\
        0&I
    \end{pmatrix}.
\end{equation}
The projected first-order Hamiltonian is
\begin{equation}
    B_h:=\mathcal P_h\mathscr D_h\mathcal P_h
    =
    \begin{pmatrix}
        0&\Pi_hG_h^\dagger\\
        G_h\Pi_h&0
    \end{pmatrix}.
\end{equation}
This is again a quantum Zeno form. The incompressibility constraint is enforced by
$\mathcal P_h$, but the projected Hamiltonian is now the first-order operator
$\mathscr D_h$ rather than the second-order diffusive operator. Formally,
\begin{equation}
    e^{-ikB_h}\mathcal P_h
    =
    \lim_{r\to\infty}
    \left(
        \mathcal P_h e^{-ik\mathscr D_h/r}\mathcal P_h
    \right)^r .
\end{equation}
The square of $B_h$ is block diagonal, that is, 
\begin{equation}
\label{eq:Bh-square-new}
    B_h^2
    =
    \begin{pmatrix}
        \Pi_hG_h^\dagger G_h\Pi_h&0\\
        0&G_h\Pi_hG_h^\dagger
    \end{pmatrix}
    =
    \begin{pmatrix}
        S_h&0\\
        0&G_h\Pi_hG_h^\dagger
    \end{pmatrix}.
\end{equation}
Let $\big(\ket{0}\otimes I\big)$ denote the injection into the first block, so that
\begin{equation}
    (\ket{0}\otimes I)\bm v
    =
    \begin{pmatrix}
        \bm v\\0
    \end{pmatrix}.
\end{equation}
Then
\begin{equation}
\label{eq:Stokes-semigroup-from-Bh}
    e^{-tS_h}
    =
    (\bra{0}\otimes I)
    e^{-tB_h^2}
    (\ket{0}\otimes I).
\end{equation}
Combining \cref{eq:gaussian-moment-dilation} and
\cref{eq:Stokes-semigroup-from-Bh} gives the exact Gaussian-Zeno representation
\begin{equation}
\label{eq:exact-gaussian-zeno-Stokes}
    e^{-tS_h}
    =
    (\bra{0}\otimes I)
    (\bra{g}\otimes I)
    e^{-i\sqrt t\,F\otimes B_h}
    (\ket{g}\otimes I)
    (\ket{0}\otimes I).
\end{equation}
Equivalently,
\begin{equation}
\label{eq:Stokes-Gaussian-integral-new}
    e^{-tS_h}
    =
    (\bra{0}\otimes I)
    \left[
    \frac{1}{\sqrt{4\pi t}}
    \int_{\mathbb R}
    e^{-k^2/(4t)}e^{-ikB_h}\,dk
    \right]
    (\ket{0}\otimes I).
\end{equation}
The important point is that the unitary evolutions in the integral are generated by
$B_h$, whose norm scales like a first-order differential operator:
\begin{equation}
\label{eq:Bh-norm-scale-new}
    \|B_h\|=\|G_h\Pi_h\|\le \|G_h\|=\Theta(h^{-1}),
\end{equation}
whereas $\|S_h\|=\Theta(h^{-2})$.

This formulation is equivalent to the Gaussian-LCHS approach \cite{kharazi2026sublinear}, in which the integral is efficiently treated by Gaussian quadratures. 
\begin{equation}
\label{eq:Gaussian-LCHS-Stokes}
    e^{-tB_h^2}
    \approx
    \sum_{m=0}^{M-1}c_m(t)e^{-ik_mB_h},
    \qquad
    \sum_{m=0}^{M-1}|c_m(t)|=1+O(\varepsilon),
\end{equation}
with overall error at most $\varepsilon$. The same transmutation viewpoint also
appears in recent work on quantum simulation of non-unitary dynamics by Gaussian
superpositions of wave propagators~\cite{jin2026transmutation}. Since the
LCU normalization is constant up to the target precision, the dominant cost is the
largest Hamiltonian simulation time in the quadrature. If $B_h$ has an
$(\alpha_B,a_B,\eta_B)$ block-encoding, then the resulting query complexity is
\begin{equation}
\label{eq:Gaussian-LCHS-query-short}
    \widetilde O(\alpha_B\sqrt t)
\end{equation}
queries to the block-encoding of $B_h$, with polylogarithmic dependence on
$\varepsilon^{-1}$. In the Stokes setting, \cref{eq:Bh-norm-scale-new} gives
$\alpha_B=\Theta(h^{-1})$.

We summarize the resulting cost in the following proposition.

\begin{Prop}[Gaussian-Zeno simulation of the reduced Stokes semigroup]
\label{prop:gaussian-zeno-stokes-new}
Let
\begin{equation}
    S_h=-\Pi_h\Delta_h\Pi_h=(G_h\Pi_h)^\dagger(G_h\Pi_h)
\end{equation}
be the reduced positive Stokes operator. Assume that $G_h$ and $D_h$ admit
block-encodings with normalizations
\begin{equation}
    \alpha_G=\Theta(h^{-1}),
    \qquad
    \alpha_D=\Theta(h^{-1}),
\end{equation}
and that the nonzero singular values of $D_h$ are bounded below by a mesh-independent
constant $\gamma=\Theta(1)$. Then an $\varepsilon$-accurate block-encoding of
$e^{-tS_h}$ can be constructed with gate complexity
\begin{equation}
\label{eq:Gaussian-Zeno-general-cost-new}
    \widetilde O\!\left(
        \sqrt t\,h^{-1}\bigl(T_G+h^{-1}T_D\bigr)
    \right),
\end{equation}
where $T_G$ and $T_D$ denote the costs of one query to the block-encodings of $G_h$ and
$D_h$, respectively.
\end{Prop}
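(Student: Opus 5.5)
The proof assembles three pieces: a block-encoding of $B_h$, the Gaussian-LCHS quadrature \cref{eq:Gaussian-LCHS-Stokes}, and the projection onto the first block \cref{eq:Stokes-semigroup-from-Bh}.

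\textbf{Step 1: block-encode $\Pi_h$.} Apply the approximate-projector lemma (QSVT with spectral gap) to $D_h$. The gap assumption gives $\sigma(D_h)\subset\{0\}\cup[\gamma,\alpha_D]$ with $\alpha_D/\gamma=\Theta(h^{-1})$. This yields an $\eta_P$-approximate block-encoding $U_\Pi$ of $\Pi_h$ using $\mathfrak p=O(h^{-1}\log(1/\eta_P))$ queries to $D_h$ and $D_h^\dagger$, so its cost is $\widetilde O(h^{-1}T_D)$.

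\textbf{Step 2: block-encode $B_h$.} Write
\[
B_h=\mathcal P_h\mathscr D_h\mathcal P_h,\qquad \mathcal P_h=\ket{0}\!\bra{0}\otimes\Pi_h+\ket{1}\!\bra{1}\otimes I .
\]
A block-encoding of $\mathcal P_h$ is obtained by controlling $U_\Pi$ on the block qubit, and the Hermitian dilation $\mathscr D_h$ of $G_h$ has an $(\alpha_G,\cdot,\cdot)$ block-encoding from one use each of $U_G$ and $U_G^\dagger$. By the product rule \cref{lem:BE-lcu}(b), $B_h$ then has a block-encoding with normalization $\alpha_B=\alpha_G=\Theta(h^{-1})$, error $O(\eta_G+\alpha_G\eta_P)$, and per-query cost $O(T_G+h^{-1}T_D)$ up to logarithms.

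\textbf{Step 3: simulate the Gaussian superposition.} Use the Gaussian representation \cref{eq:Stokes-Gaussian-integral-new}. Truncate the $k$-integral to $|k|\le K=O(\sqrt{t\log(1/\varepsilon)})$, which costs error $\varepsilon/3$ because of the Gaussian tail. Discretize the truncated integral by a quadrature with $M=\mathrm{polylog}(1/\varepsilon)$ nodes, as in the Gaussian-LCHS analysis of \cite{kharazi2026sublinear}. Because the weights are nonnegative and the integrand is a contraction, the coefficient $\ell_1$-norm is $1+O(\varepsilon)$.

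Implement the linear combination by LCU. The select oracle applies $e^{-ik_mB_h}$ controlled on $m$, and it can be realized with a single controlled Hamiltonian simulation of $B_h$ for times up to $K$. By \cref{lem: optimal sim} this uses $O(\alpha_BK+\log(1/\varepsilon))=\widetilde O(h^{-1}\sqrt t)$ queries to the block-encoding of $B_h$. Sandwiching with $\ket{0}$ on the block register, as in \cref{eq:Stokes-semigroup-from-Bh}, gives an $\varepsilon$-accurate block-encoding of $e^{-tS_h}$ with normalization $O(1)$. Multiplying by the per-query cost from Step 2 gives \cref{eq:Gaussian-Zeno-general-cost-new}.

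\textbf{Main obstacle: error propagation from the inexact projector.} The block-encoding error of $B_h$ is amplified by the simulation time, roughly by a factor $K\sim\sqrt t$. A second issue is that the encoded operator is only approximately $\mathcal P_h\mathscr D_h\mathcal P_h$, so it may fail to be exactly Hermitian and to leave $\mathrm{Range}(\mathcal P_h)$ invariant. I would handle this as follows.
\begin{itemize}
\item Build the QSVT projector from an even real polynomial, so that $U_\Pi$ encodes a Hermitian operator $\widetilde\Pi=p(D_h^\dagger D_h)$ with $\|\widetilde\Pi-\Pi_h\|\le\eta_P$.
\item Then $\widetilde B_h:=\widetilde{\mathcal P}_h\mathscr D_h\widetilde{\mathcal P}_h$ is exactly Hermitian, and $\|\widetilde B_h-B_h\|=O(\alpha_G\eta_P)$.
\item By Duhamel, $\|e^{-ik\widetilde B_h}-e^{-ikB_h}\|\le|k|\,\|\widetilde B_h-B_h\|$. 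Choosing $\eta_P=O(\varepsilon h/\sqrt t)$ and $\eta_G=O(\varepsilon/\sqrt t)$ therefore keeps the total error below $\varepsilon/3$.
\item This choice affects $\mathfrak p$ only logarithmically, so it is absorbed into the $\widetilde O$.
\end{itemize}
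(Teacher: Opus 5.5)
Your proposal is correct and follows essentially the same route as the paper. The shared steps are: QSVT for $\Pi_h$ at cost $\widetilde O(h^{-1}T_D)$; a block-encoding of $B_h$ with $\alpha_B=\Theta(h^{-1})$ and per-query cost $O(T_G+h^{-1}T_D)$; and Gaussian-LCHS with $\widetilde O(\alpha_B\sqrt t)$ queries. You compress $\mathscr D_h$ by a controlled $\Pi_h$ where the paper takes the Hermitian dilation of $G_h\Pi_h$, but this is the same operator at the same cost. Your tail truncation and Duhamel error bookkeeping supply details the paper leaves implicit.

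One claim should be corrected: with nonnegative weights, $\mathrm{polylog}(1/\varepsilon)$ quadrature nodes cannot be uniformly accurate on $[-\alpha_B,\alpha_B]$. A Fej\'er-kernel argument forces $M\gtrsim\min(\sqrt t\,\alpha_B/\sqrt{\log(1/\varepsilon)},\varepsilon^{-2})$, and a uniform grid on $|k|\le K$ needs spacing $O(1/\alpha_B)$, i.e.\ $\widetilde O(\sqrt t\,\alpha_B)$ nodes. This does not change your bound, because $M$ enters only through $\log M$ ancillas, the Gaussian-amplitude PREP state, and a SELECT whose cost is governed by $\max_m|k_m|\le K$.
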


Indeed, constructing $\Pi_h$ from $D_h$ by QSVT requires
\begin{equation}
    O\!\left(
        \frac{\alpha_D}{\gamma}\log\frac{1}{\eta_\Pi}
    \right)
    =
    \widetilde O(h^{-1})
\end{equation}
queries to the block-encoding of $D_h$. Composing this projector with the
block-encoding of $G_h$ gives a block-encoding of $A_h:=G_h\Pi_h$, and the standard
Hermitian dilation of $A_h$ gives a block-encoding of $B_h$. Thus one query to
$B_h$ costs $O(T_G+h^{-1}T_D)$ up to logarithmic factors, while the Gaussian-LCHS
simulation uses $\widetilde O(h^{-1}\sqrt t)$ such queries.

Finally, we distinguish the block-encoding of the semigroup from preparation of the
time-evolved normalized state. Applying the block-encoding of $e^{-tS_h}$ to a
normalized input state $\ket{\psi_{0,h}}\in\Ran(\Pi_h)$ prepares, after postselection,
a state proportional to $e^{-tS_h}\ket{\psi_{0,h}}$. With amplitude amplification,
preparing the normalized state
\begin{equation}
    \frac{e^{-tS_h}\ket{\psi_{0,h}}}
    {\|e^{-tS_h}\ket{\psi_{0,h}}\|}
\end{equation}
introduces the additional factor
\begin{equation}
\label{eq:chi-factor-new}
    \chi_h(t,\psi_0)
    :=
    \frac{1}{\|e^{-tS_h}\ket{\psi_{0,h}}\|}.
\end{equation}
Therefore, in the generic sparse-access model,
\begin{equation}
\label{eq:Gaussian-Zeno-state-prep-cost-new}
    \mathrm{Cost}_{\mathrm{prep}}(t,h,\varepsilon)
    =
    \widetilde O\!\left(
        \chi_h(t,\psi_0)h^{-2}\sqrt t
    \right).
\end{equation}
For fixed final time and smooth initial data whose dissipative norm does not vanish as
$h\to0$, $\chi_h(t,\psi_0)$ is independent of the mesh size. In that regime, the
mesh dependence of the Gaussian-Zeno state-preparation cost is
\begin{equation}
    \widetilde O(h^{-2}\sqrt t).
\end{equation}
The improvement over the direct projected simulation comes from replacing the
second-order scale $\|S_h\|=\Theta(h^{-2})$ by the first-order square-root scale
$\|B_h\|=\Theta(h^{-1})$, while the remaining factor $h^{-1}$ in the generic model
comes from constructing the incompressibility projector.

\begin{remark}
We also considered an alternative route based on low-energy subspace Hamiltonian
simulation~\cite{low2017hamiltonian,zlokapa2024hamiltonian}. In the Stokes setting,
such a low-energy description is naturally motivated by PDE smoothness: sufficiently
regular data have most of their spectral weight in the low-frequency eigenspaces of
the discrete Stokes operator. However, in the general block-encoding model considered
here, the resulting complexity is not as favorable as the Gaussian-Zeno construction.
The main obstruction is that the incompressibility projector is still constructed from
the full-space divergence operator, whose normalization scales like
$\|D_h\|=\Theta(h^{-1})$; this projector cost offsets part of the gain obtained from
spectral localization. In contrast, the Gaussian-Zeno approach exploits the
factorization $S_h=(G_h\Pi_h)^\dagger(G_h\Pi_h)$ directly and obtains the
$\sqrt t$ dependence through the Gaussian moment dilation. It remains an interesting
open problem whether low-energy localization and Gaussian-Zeno simulation can be
combined in a complementary way, for example by constructing a filtered block-encoding
of $G_h\Pi_h$ or of the Leray projector with reduced normalization.
\end{remark}

}

\subsection{A heuristic comparison with classical cost}

We briefly compare the cost of the quantum time-evolution procedure with that of a standard
classical projection method for the semidiscrete Stokes equations. This comparison concerns only
the cost of propagating the system over a time interval of length $t$ and does not include data
loading, state preparation, measurement, or postprocessing.

\paragraph{Classical baseline.}
Assume that the spatial discretization is second order,
\begin{equation}
\|u-u_h\|\lesssim h^2,
\end{equation}
and that the time integrator is also second order, for instance Crank--Nicolson, so that the global
time discretization error is
\begin{equation}
\|u_h(t_n)-u_h^n\|\lesssim \Delta t^2.
\end{equation}
Balancing the spatial and temporal errors suggests the choice
\begin{equation}
\Delta t\sim h.
\end{equation}

Assume further that each time step requires $O(1)$ elliptic solves and that an optimal linear-cost
solver is available, so that the work per step is
\begin{equation}
O(N_h),\qquad N_h\sim h^{-d},
\end{equation}
up to logarithmic factors. Then the total classical cost over a time interval of length $t$ is
\begin{equation}
\mathrm{Cost}_{\mathrm{classical}}
\sim
\frac{t}{\Delta t}\,N_h
\sim
t\,h^{-d-1}.
\end{equation}

By Prop. \ref{prop:gaussian-zeno-stokes-new}, the Gaussian-Zeno quantum evolution has cost
\begin{equation}
    \widetilde O\!\left(\frac{\sqrt t}{h^2}\right)
\end{equation}
for fixed precision, when the block-encoding cost is dominated by the QSVT construction of the
divergence-free projector.  Thus, at the level of the evolution step alone, the comparison is
\begin{equation}
\frac{\sqrt t}{h^2}\quad \text{vs.}\quad \frac{t}{h^3}
\qquad\text{in two space dimensions,}
\end{equation}
and
\begin{equation}
\frac{\sqrt t}{h^2}\quad \text{vs.}\quad \frac{t}{h^4}
\qquad\text{in three space dimensions.}
\end{equation}
This suggests a possible asymptotic advantage in the $h$-dependence of the simulation stage, and
also a square-root dependence on the final time in the semigroup block-encoding.  A full end-to-end
comparison, however, would also need to account for state preparation, postselection or amplitude
amplification, measurement, and data-access costs, which are not analyzed here.

\subsection{Numerical tests}

We verify the unitary dilation for the 2D incompressible Stokes equations on $\Omega = [0,1]^2$ with Dirichlet boundary conditions by using the Marker-and-Cell (MAC) staggered grid.

\paragraph{MAC Staggered Grid.}
The marker-and-cell (MAC) staggered grid is a finite-difference discretization designed for incompressible flow problems such as the Stokes and Navier--Stokes equations. The related works for MAC staggered grid methods can be found in \cite{harlow1965numerical, mckee2008mac, rui2017stability, li2023error, chen2016finite, han1998new}. The pressure of the flow is typically stored at cell centers, while the velocity components are placed on cell faces. In two dimensions, for example, the horizontal velocity $u_1$ is stored at the midpoints of vertical cell edges, and the vertical velocity $u_2$ is stored at the midpoints of horizontal cell edges, whereas the pressure $p$ is stored at the center of each grid cell. This staggering is advantageous because the discrete divergence and gradient operators naturally couple neighboring pressure and velocity unknowns, and it helps avoid the spurious pressure oscillations that often occur on collocated grids. As a result, the MAC grid provides a stable and physically natural discretization of the incompressibility constraint $\nabla\cdot \bu=0$, with the discrete divergence mapping face-centered velocities to cell-centered scalar quantities and the discrete gradient mapping cell-centered pressures back to face-centered forces.

\begin{figure}[ht]
\centering
\includegraphics[width=0.5\linewidth]{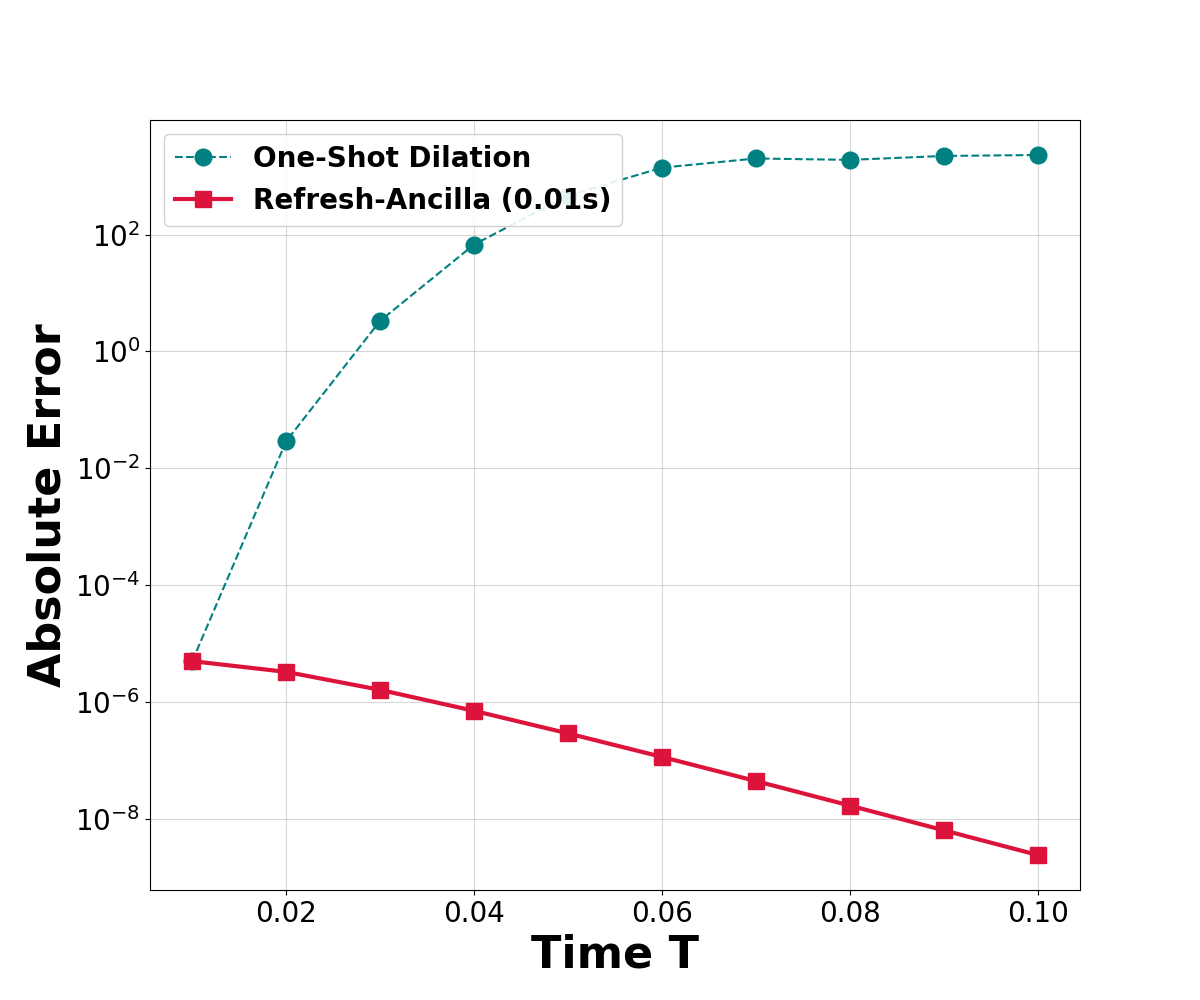}
\caption{Error from the dilation technique described in \cref{section: math_form}. Here we pick the ancilla Hilbert space to be $\C^{20}$ and the projection $p^* = 5\times 10^{-6}$.}
\label{fig: refresh_dil}
\end{figure}

\cref{fig: refresh_dil} plots  $\|\bm x(T) - \bm y(T) \|_{\ell_2}$, where $\bm x$ contains all the discretized values from \cref{eq:stokes_strong_application} and $\bm y$ is the vector containing the values from dilation method. One plots the error between the benchmark and the dilation, and another plots the dilation but perform ancilla refresh for every time step. \cref{fig: original_sys} plots the velocity and pressure original system \cref{eq:stokes_strong_application} after the MAC discretization. \cref{fig: dilated_sys} verifies \cref{thm:dilation} under short time period, meanwhile for long time we can do ancilla refreshment as in \cref{fig: refresh_dil}. 

\begin{figure}[ht]
    \centering
    \includegraphics[width=0.7\linewidth]{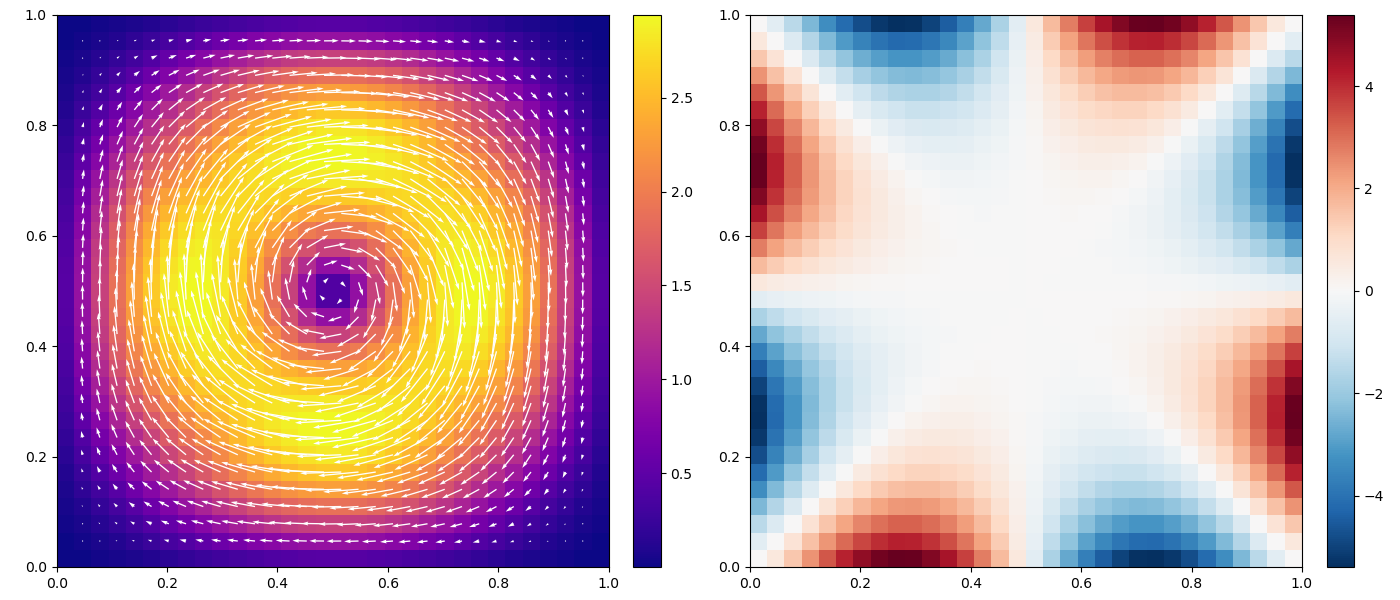}
    \caption{Velocity and pressure from the original discretized Stokes equation. Here we make the grid number be $32\times 32$, final time $t=0.001$ and the initial velocity is 
    $\bm u_0 = \mat{u_1(0) & u_2(0)}^T = \mat{-\pi\sin^2(\pi x)\sin(2\pi y) & \pi \sin^2(\pi y)\sin(2\pi x) }^T.$}
    \label{fig: original_sys}
\end{figure}

\begin{figure}[ht]
    \centering
    \includegraphics[width=0.7\linewidth]{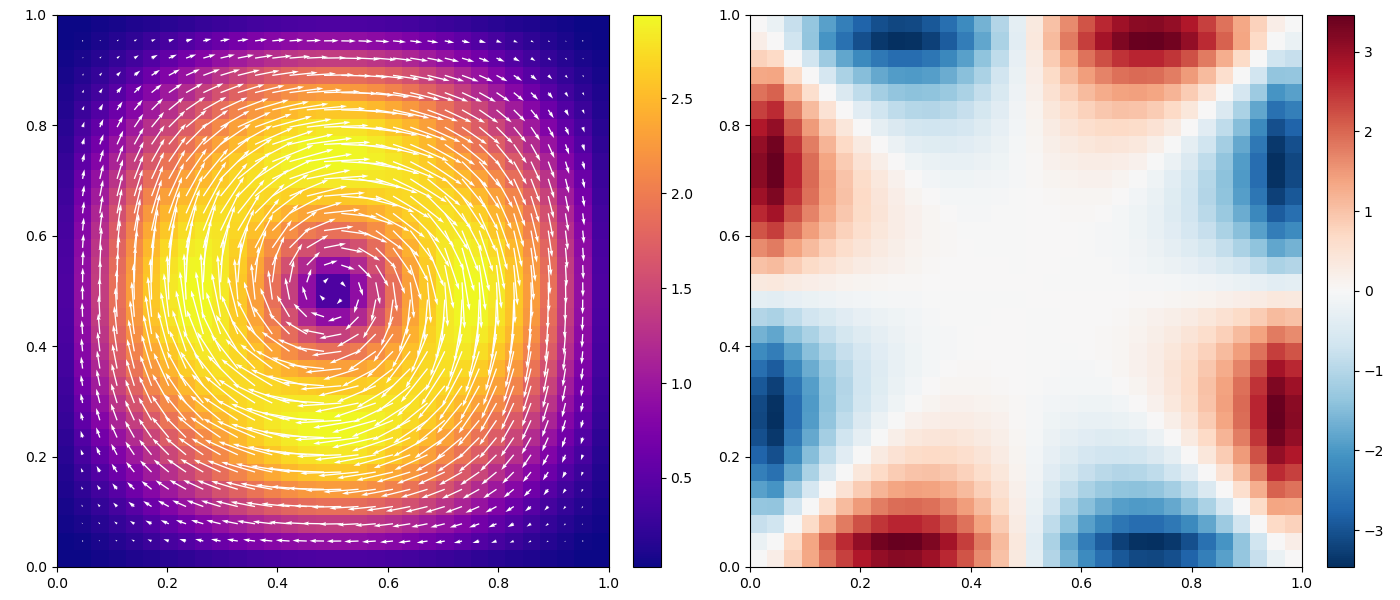}
    \caption{Recovered velocity and pressure from the dilated DAE. \cref{fig: dilated_sys} shows the recovery of the velocity and pressure from the original discretized DAE. We pick the ancilla space dimension $M=65$.}
    \label{fig: dilated_sys}
\end{figure}

\section{Summary and Discussions}

In this work, we have proposed a framework for quantum algorithms for linear differential-algebraic
equations (DAEs) by reformulating constrained dynamics as projected Schr\"odinger evolution on an
enlarged space. The main conceptual novelty is the identification of an intrinsic connection
between DAE constraints and quantum Zeno dynamics. In the dilated formulation, the physical
solution is recovered as the evolution inside an invariant constrained subspace selected by an
orthogonal projector. Thus, the constraint mechanism in the original DAE is represented, after
dilation, by a Zeno-type projected Hamiltonian dynamics. This reveals a natural bridge between
constrained numerical dynamics and Hamiltonian-based quantum computation.

Once the problem is written in this form, it becomes accessible to standard quantum algorithmic
tools. In particular, the projected dynamics can be expressed in terms of block-encodings of the
lifted Hamiltonian and the associated constraint projector, and can then be simulated using modern
Hamiltonian simulation techniques. From this perspective, the dilation is not merely a formal
embedding; it provides a mechanism for turning a constrained, generally nonunitary system into a
unitary projected dynamics compatible with quantum algorithms.

A particularly interesting application is to PDE discretizations. Among these, the unsteady
incompressible Stokes equations provide a natural test problem because they are fundamental in
fluid mechanics and central to numerical analysis. Their PDE origin is also important
algorithmically: as the mesh is refined, the associated discrete differential operators are not
uniformly bounded, with diffusive scales typically growing like $h^{-2}$. Thus, unlike
finite-dimensional DAE examples with uniformly bounded generators, the Stokes problem forces one
to confront the mesh dependence of Hamiltonian simulation, constraint projection, and
block-encoding construction.

For Stokes, the Zeno-reduced dynamics have an additional structure that goes beyond the generic
DAE formulation.  The reduced positive operator satisfies
\[
    S_h=-\Pi_h\Delta_h\Pi_h=(G_h\Pi_h)^\dagger(G_h\Pi_h),
\]
and therefore admits a first-order square-root Hamiltonian
\[
    B_h=
    \begin{pmatrix}
        0&\Pi_hG_h^\dagger\\
        G_h\Pi_h&0
    \end{pmatrix}.
\]
This Hamiltonian is itself a projected, or Zeno-compressed, Dirac-type operator.  We showed that
the Stokes semigroup can be represented by a Gaussian moment dilation,
\[
    e^{-tS_h}
    =
    (\langle 0|\otimes I)(\langle g|\otimes I)
    e^{-i\sqrt t\,F\otimes B_h}
    (|g\rangle\otimes I)(|0\rangle\otimes I),
\]
or equivalently by a Gaussian superposition of unitary evolutions generated by $B_h$.  This
replaces the second-order diffusive scale $\|S_h\|=\Theta(h^{-2})$ by the first-order scale
$\|B_h\|=\Theta(h^{-1})$ in the Hamiltonian simulation component.

Our current assessment suggests a modest but meaningful quantum advantage at the level of the
time-evolution step. In the generic sparse-access model, constructing the incompressibility
projector from the divergence operator by QSVT still contributes a factor $\widetilde O(h^{-1})$,
because $\|D_h\|=\Theta(h^{-1})$ while the discrete inf-sup gap is assumed to be
mesh-independent. Combining this projector cost with the Gaussian-LCHS simulation of the
square-root Hamiltonian gives the evolution-stage scaling
\[
    \widetilde O(h^{-2}\sqrt t)
\]
for fixed precision, up to the usual postselection factor needed to prepare the normalized
dissipative state.  

We view the complexity discussion in this paper as a first step rather than a definitive
separation from classical algorithms. The results indicate that constrained PDE dynamics fit
naturally into quantum Zeno and Hamiltonian-simulation frameworks, and that additional PDE
structure may be used to improve the mesh and time dependence of the quantum simulation stage.
Further progress may come from sharper block-encoding constructions, more efficient use of
filtered or spectral projectors, improved state preparation and measurement strategies, and more
detailed comparisons with high-performance classical solvers.

We hope that the framework developed here provides a useful starting point for broader
investigations. It would be interesting to test these ideas on other classes of DAEs, other
structure-preserving discretizations, and other constrained systems from fluid mechanics,
mechanics, and scientific computing. More broadly, the comparison between the universal
moment-matching dilation for general nonunitary constrained dynamics and the Gaussian
moment dilation for factorizable dissipative generators suggests a hierarchy of quantum
simulation strategies: general but potentially diffusive-scale methods on one hand, and more
specialized square-root methods on the other. Understanding when such structures arise, and how
they interact with constraints, remains an important direction for future work.
\section{Acknowledgments}
This work was supported by NSF Grant DMS-2411120. The authors used ChatGPT and Gemini for language polishing and programming assistance.

\bibliographystyle{unsrt}
\bibliography{ham, zeno, dae}

\end{document}